\newcommand{\pT}{t}%%%{\mbox{\large$\mathpzc{t}$}}
\newcommand{\nsize}[1]{\mbox{\normalsize$\displaystyle#1$}}
\newcommand{\Mat}{\mathrm{Mat}}
\newcommand{\ribbon}{{\pmb{v}}}
\newcommand{\coint}{{\pmb{\upLambda}}}%%%{{\boldsymbol{c}}}
\newcommand{\diff}{d}
\newcommand{\Dz}{\dd}
\newcommand{\dz}{\zeta}
\newcommand{\dDz}{\delta}
\newcommand{\Czd}{\oC_{\q}[z,\Dz]}
\newcommand{\bCzd}{\overline{\oC}_{\q}[z,\Dz]}
\newcommand{\OCzd}{\Omega\overline{\oC}_{\q}[z,\Dz]}
\newcommand{\algH}{\mathscr{H}}
\newcommand{\II}{\mathscr{I}}
\newcommand{\JJ}{\mathscr{J}}
\newcommand{\TT}{\mathscr{T}}
\newcommand{\rep}{\mathscr}
\newcommand{\repX}{\rep{X}}
\newcommand{\repP}{\rep{P}}
\newcommand{\bref}[1]{\textbf{\ref{#1}}}
\newcommand{\SL}[1]{s\ell(#1)}
\renewcommand{\geq}{\geqslant}
\renewcommand{\leq}{\leqslant}
\newcommand{\tensor}{\otimes}
\newcommand{\q}{\mathfrak{q}}
\newcommand{\floor}[1]{\lfloor#1\rfloor}
\newcommand{\UresSL}[1]{\overline{\mathscr{U}}_{\q} s\ell(#1)}
\newcommand{\U}{\mathscr{U}}
\newcommand{\mfrac}[2]{\raisebox{.8pt}{\mbox{\small$\displaystyle\frac{#1}{#2}$}}}
\newcommand{\ffrac}[2]{\raisebox{.5pt}{\mbox{\footnotesize$\displaystyle\frac{#1}{#2}$}}}
\newcommand{\fffrac}[2]{\raisebox{.9pt}{\mbox{\scriptsize$\displaystyle\frac{#1}{#2}$}}}
\newcommand{\half}{%
  \mathchoice{\ffrac{1}{2}}{\frac{1}{2}}{\frac{1}{2}}{\frac{1}{2}}}
\newcommand{\qfac}[1]{[#1]!\,}
\newcommand{\qint}[1]{[#1]}
\newcommand{\qbin}[2]{\mathchoice%
  {\qbinm{#1}{#2}}{\qbinmm{#1}{#2}}%
  {\qbinmm{#1}{#2}}{\qbinmm{#1}{#2}}}
\newcommand{\qbinm}[2]{\mbox{\footnotesize$\displaystyle
    \genfrac{[}{]}{0pt}{}{#1}{#2}$}}
\newcommand{\qbinmm}[2]{\genfrac{[}{]}{0pt}{}{#1}{#2}}
\newcommand{\nbin}[2]{\mathchoice%
  {\nbinm{#1}{#2}}{\nbinmm{#1}{#2}}%
  {\nbinmm{#1}{#2}}{\nbinmm{#1}{#2}}}
\newcommand{\nbinm}[2]{\mbox{\footnotesize$\displaystyle
    \genfrac{(}{)}{0pt}{}{#1}{#2}$}}
\newcommand{\nbinmm}[2]{\genfrac{(}{)}{0pt}{}{#1}{#2}}
\newcommand{\dd}{\partial}
\newcommand{\oC}{\mathbb{C}}
\newcommand{\oZ}{\mathbb{Z}}
\numberwithin{equation}{section}
\def\@secnumfont{\bfseries}
\def\subsubsection{\@startsection{subsubsection}{3}%
  \z@{.5\linespacing\@plus.7\linespacing}{-.5em}%
  {\normalfont\bfseries}}
\def\paragraph{\@startsection{paragraph}{4}%
  \z@\z@{-\fontdimen2\font}%
  \normalfont\bfseries}
\def\subparagraph{\@startsection{subparagraph}{5}%
  \z@\z@{-\fontdimen2\font}%
  \normalfont\bfseries}
\newtheorem{thm}[subsubsection]{Theorem}
\newtheorem{lemma}[subsubsection]{Lemma}
\theoremstyle{definition}
\begin{document}

\title[Differential $\U$-module algebra]{A differential $\U$-module
  algebra for $\U=\UresSL2$ at an even root of unity}

\author[Semikhatov]{A.M.~Semikhatov}%

\address{Lebedev Physics Institute
  \hfill\mbox{}\linebreak \texttt{ams@sci.lebedev.ru}}

\begin{abstract}
  We show that the full matrix algebra $\Mat_p(\oC)$ is a $\U$-module
  algebra for $\U=\UresSL2$, a $2p^3$-dimensional quantum $\SL2$ group
  at the $2p$th root of unity.  $\Mat_p(\oC)$ decomposes into a direct
  sum of projective $\U$-modules $\repP^+_n$ with all odd~$n$, $1\leq
  n\leq p$.  In terms of generators and relations, this $\U$-module
  algebra is described as the algebra of $q$-differential operators
  ``in one variable'' with the relations $\Dz\, z = \q - \q^{-1} +
  \q^{-2} z\,\Dz$ and $z^p=\Dz^p=0$.  These relations define a
  ``parafermionic'' statistics that generalizes the fermionic
  commutation relations.  By the Kazhdan--Lusztig duality, it is to be
  realized in a manifestly quantum-group-symmetric description of
  $(p,1)$ logarithmic conformal field models.  We extend the
  Kazhdan--Lusztig duality between $\U$ and the $(p,1)$ logarithmic
  models by constructing a quantum de~Rham complex of the new
  $\U$-module algebra and discussing its field-theory counterpart.
\end{abstract}

\maketitle

\thispagestyle{empty}

%% \setcounter{tocdepth}{2}%3
%%
%% \vspace*{-24pt}
%%
%% \begin{footnotesize}\addtolength{\baselineskip}{-6pt}
%%   \tableofcontents
%% \end{footnotesize}

\section{Introduction}
\subsection{The main results}
For an integer $p\geq2$, let $\q=e^{\frac{i\pi}{p}}$ and let
$\U=\UresSL2$ be the quantum group with generators $E$, $K$, and $F$
and the relations
\begin{gather}\label{the-qugr}
  \begin{aligned}
    KEK^{-1}&=\q^2E,\quad
    KFK^{-1}=\q^{-2}F,\\
    [E,F]&=\ffrac{K-K^{-1}}{\q-\q^{-1}},
  \end{aligned}
  \\
  \label{the-constraints}
  E^{p}=F^{p}=0,\quad K^{2p}=1
\end{gather}
(and the Hopf algebra structure to be described below).

We construct a representation of $\U$ on the full matrix algebra
$\Mat_p(\oC)$.  For a $p\times p$ matrix
$X=(x^{\vphantom{c}}_{ij})$, \ $(EX)^{\vphantom{c}}_{ij}$ is a linear
combination of the right and upper neighbors of
$x^{\vphantom{c}}_{ij}$, and $(FX)^{\vphantom{c}}_{ij}$ is a linear
combination of the left and lower neighbors, with the coefficients
shown in the diagrams:
\begin{equation}\label{diagrams}
  E:
  \begin{array}{c}
    \xymatrix@C=50pt@R=40pt{
      *+[F-]{i-1,j}
      \ar_{\textstyle-\fffrac{\q^{2(i-j-1)}}{\q-\q^{-1}}}[d]
      &\\
      *+[F-,]{i,j}&*+[F-]{i,j+1}\ar_(.53){(\q-\q^{-1})^{-1}}[l]
    }
  \end{array}
  \
  F:
  \begin{array}{c}
    \xymatrix@C=60pt@R=40pt{
      *+[F-]{i,j-1}\ar^(.56){-\q^{j-2i}\qint{j-1}}[r]
      &*+[F-,]{i,j}\\
      &*+[F-]{i+1,j}\ar_{\q^{1-i}\qint{i}}[u]
    }
  \end{array}
\end{equation}

\bigskip

\noindent
With the necessary modifications at the boundaries, the precise
formulas are as follows:\pagebreak[3]
\begin{small}%
\begin{align}
  \nsize{E(X)}
  &\nsize{{}={}}
  \ffrac{1}{\q - \q^{-1}}\renewcommand{\arraycolsep}{0.5pt}
  \begin{pmatrix}
    x^{\vphantom{c}}_{12}&\dots&x^{\vphantom{c}}_{i,j+1}&\dots&0
    \\
    \vdots&\ddots&\vdots&   &\vdots
    \\
    x^{\vphantom{c}}_{i,2}-\q^{2 (i-2)}x^{\vphantom{c}}_{i-1,1}
    &\dots&x^{\vphantom{c}}_{i,j+1}-\q^{2 (i-j-1)}x^{\vphantom{c}}_{i-1,j}&\dots
    &-\q^{2 (i-1)} x^{\vphantom{c}}_{i-1,p}
    \\
    \vdots&   &\vdots&\ddots&\vdots
    \\
    x^{\vphantom{c}}_{p,2}-\q^{-4} x^{\vphantom{c}}_{p-1,1}
    &\;\dots&x^{\vphantom{c}}_{p,j+1}-\q^{-2 (j+1)}x^{\vphantom{c}}_{p-1,j}&\;\dots&
    -\q^{-2} x^{\vphantom{c}}_{p-1,p}
  \end{pmatrix}
  \notag
  \\
  \intertext{\normalsize(with the only zero in the top right corner),
    \textit{where we explicitly show the $i$th row and the $j$th
      column};}
  \nsize{(KX)^{\vphantom{c}}_{ij}}&
  \nsize{{}=\q^{2(i-j)}x^{\vphantom{c}}_{ij};}\notag
  \\
  \intertext{\normalsize and}
  \nsize{F(X)}
  &\nsize{{}={}}\renewcommand{\arraycolsep}{0.5pt}
  \begin{pmatrix}
    x^{\vphantom{c}}_{21}&\dots
    &x^{\vphantom{c}}_{2,j}-\q^{j - 2}\qint{j - 1}x^{\vphantom{c}}_{1,j-1}&\dots
    &x^{\vphantom{c}}_{2,p}+\q^{-2}x^{\vphantom{c}}_{1,p-1}
    \\
    \vdots&\ddots&\vdots&   &\vdots
    \\
    \q^{1 - i}\qint{i}x^{\vphantom{c}}_{i+1,1}\ &\dots
    &
    \q^{1 - i}\qint{i}x^{\vphantom{c}}_{i+1,j}
    - \q^{j - 2i}\qint{j - 1}x^{\vphantom{c}}_{i,j-1}\ &\dots
    &\q^{1 - i}\qint{i}x^{\vphantom{c}}_{i+1,p}+\q^{-2i}x^{\vphantom{c}}_{i,p-1}
    \\
    \vdots&   &\vdots&\ddots&\vdots
    \\
    0&\;\dots&-\q^j \qint{j-1}x^{\vphantom{c}}_{p,j-1}&\;\dots&
    x^{\vphantom{c}}_{p,p-1}
  \end{pmatrix}
  \notag
\end{align}%
\end{small}%
(with the only zero in the bottom left corner), where we again show
the $i$th row and the $j$th column, and where
\begin{equation*}
  \qint{n}=\ffrac{\q^n - \q^{-n}}{\q - \q^{-1}}.
\end{equation*}

\medskip
\noindent\textbf{Theorem.}
\begin{it}
  \begin{enumerate}
  \item The above formulas define a representation of \ $\U=\UresSL2$
    on $\Mat_p(\oC)$.

  \item $\Mat_p(\oC)$ is a $\U$-module algebra.

  \end{enumerate}
\end{it}
We recall that for a Hopf algebra $\algH$, an $\algH$-module algebra
is an algebra in the tensor category of $\algH$-modules, i.e., is a
(left) $\algH$-module $V$ with a composition law $V\tensor V\to V$
such that $h(v\,w)=\sum h'(v)\,h''(w)$ for $h\in\algH$ and $v,w\in V$
(here, $\Delta(h)=\sum h'\tensor h''$ is Sweedler's notation for
coproduct).\footnote{In simple words, the condition states a natural
  compatibility between the $\algH$-action and multiplication on $V$,
  ``natural'' because $\algH$ acts on a product via comultiplication.
  Claim~(2) is thus that the standard matrix multiplication is
  compatible with the proposed action of~$\U$ (and its
  comultiplication).}

The quantum group $\U$ has $2p$ irreducible representations
$\repX^\pm_r$, $1\leq r\leq p$, with
$\dim\repX^\pm_r=r$~\cite{[FGST]}.  We let $\repP^\pm_r$ denote their
projective covers.  The ``plus'' representations are distinguished
from the ``minus'' ones by the fact that tensor products
$\repX^+_r\tensor\repX^+_s$ decompose into the $\repX^+_{r'}$ and
$\repP^+_{r'}$ (and $\repX^+_1$ is the trivial representation).

\medskip
\noindent\textbf{Theorem} (continued)\textbf{.}
\begin{it}
  \begin{enumerate}\setcounter{enumi}{2}
  \item $\Mat_p(\oC)$ \textit{decomposes} into indecomposable
    $\U$-modules as
    \begin{equation}\label{the-decomposition}
      \Mat_p(\oC)=\repP^+_1\oplus\repP^+_3\oplus\dots\oplus\repP^+_\nu,
      \pagebreak[3]
    \end{equation}\pagebreak[3]%
    where $\nu=p$ is $p$ if odd and $p-1$ if $p$ is even.
  \end{enumerate}
\end{it}

The algebra in~\eqref{the-decomposition} is the smallest $\U$-module
algebra that contains $\repP^+_1$, the projective cover of the trivial
representation.  This $2p$-dimensional module can be visualized as a
span of $2p$ elements with the $\U$-action given by~\cite{[FGST]}
\begin{equation*}
  \xymatrix@=12pt@C=6pt{
    &&&&&\pT
    \ar^{F}[dr]
    \ar_{E}[dl]
    \\  
    %% *{x\Dp{p-1}\Dy^{p-1}}
    \ell_{p-1}
    &\kern-10pt\rightleftarrows\kern-10pt
    &
    %% *{F(x\Dp{p-1}\Dy^{p-1})}
    \ell_{p-2}
    &
    \kern-10pt\rightleftarrows \ldots \rightleftarrows\kern-10pt
    &\ell_1
    \ar_{F}[dr]
    &{}
    &r_1
    \ar^{E}[dl]
    &\kern-10pt\rightleftarrows \ldots \rightleftarrows\kern-10pt
    &
    %% *{E(y\Dp{p-1}\Dy^{p-1})}
    r_{p-2}
    &\kern-10pt\rightleftarrows\kern-10pt
    &
    %% *{y\Dp{p-1}\Dy^{p-1}}
    r_{p-1}
    \\
    &&&&&
    1
  }
\end{equation*}
where the horizontal arrows represent the action of $E$ (to the left)
and $F$ (to the right) up to \textit{nonzero} factors and the tilted
arrows indicate that the map in the opposite direction vanishes; the
bottom $1$ spans the $1$-dimensional submodule.  In the
\textit{algebra} defined on the sum of projective modules, we can say
more about the structure of~$\repP^+_1$.

\medskip
\noindent\textbf{Theorem} (continued)\textbf{.}
\begin{it}
  \begin{enumerate}\setcounter{enumi}{3}
  \item There is an isomorphism of \ $\U$-module algebras
    \begin{equation*}
      \repP^+_1\oplus\repP^+_3\oplus\dots\oplus\repP^+_\nu\cong\bCzd,
    \end{equation*}
    where $\bCzd$ is the associative algebra with generators $\Dz$ and
    $z$ and the relations
    \begin{gather}\label{the-relation}
      \Dz\, z = \q - \q^{-1} + \q^{-2} z\,\Dz,
      \\
      \label{^p-relations}
      \Dz^p=0,\qquad z^p=0.
    \end{gather}

  \item Under this isomorphism, the ``wings'' of the projective module
    $\repP^+_1$ in~\eqref{the-decomposition} are powers of a single
    generator each,
    \begin{equation}\label{repPp1}
      \xymatrix@=12pt@C=6pt{
        &&&&&\pT
        \ar^{F}[dr]
        \ar_{E}[dl]
        \\  
        %% *{x\Dp{p-1}\Dy^{p-1}}
        z^{p-1}
        &\kern-10pt\rightleftarrows\kern-10pt
        &
        %% *{F(x\Dp{p-1}\Dy^{p-1})}
        z^{p-2}
        &
        \kern-10pt\rightleftarrows \ldots \rightleftarrows\kern-10pt
        &z
        \ar_{F}[dr]
        &{}
        &\Dz
        \ar^{E}[dl]
        &\kern-10pt\rightleftarrows \ldots \rightleftarrows\kern-10pt
        &
        %% *{E(y\Dp{p-1}\Dy^{p-1})}
        \Dz^{p-2}
        &\kern-10pt\rightleftarrows\kern-10pt
        &
        %% *{y\Dp{p-1}\Dy^{p-1}}
        \Dz^{p-1}
        \\
        &&&&&
        1
      }
    \end{equation}
    and the ``top'' element is
    \begin{equation}\label{the-t}
      t=\sum_{i=1}^{p-1}\ffrac{1}{\qint{i}}\,z^i\,\Dz^i.
    \end{equation}
  \end{enumerate}
\end{it}
In other words, our $\U$-module algebra is identified with the algebra
of $q$-differential operators ``in one variable'' with nilpotency
conditions~\eqref{^p-relations} (and with a slightly unusual rule for
carrying $\Dz$ through~$z$).  This is parallel to the classic result
that $\Mat_p(\oC)$ is generated by $x$ and $y$ satisfying the
relations $yx = q xy$ and $x^p = y^p = 1$, where $q$ is the $p$th root
of unity~\cite{[W]} (a finite quantum plane in the modern
terminology), but there seems to be no direct (``exponential'')
relation between our ``nilpotent'' ($\Dz^p=z^p=0$) and the classic
``unipotent'' ($x^p=y^p=1$) constructions.  Apart from matrix
curiosities, the $q$-differential operators yield a preferential
(``more invariant'') description of the algebra on the sum of ``odd''
projective $\U$-modules $\repP^+_1\oplus\repP^+_3\oplus\ldots$
compared with its matrix realization.

Obviously, $t$ in~\eqref{repPp1} is defined up to the addition of
$\alpha 1$, $\alpha\in\oC$, and expression~\eqref{the-t} is therefore
a particular representative of this class; this is to be understood in
what follows.

For a quasitriangular $\algH$, an $\algH$-module algebra is said to be
quantum commutative~\cite{[CW]} (also, $\algH$-, $R$-, or braided
commutative) if
\begin{gather}\label{equivariance}
  v\,w=\sum R^{(2)}(w)\,R^{(1)}(v),
\end{gather}
for all $v,w\in V$, where $R=\sum R^{(1)}\tensor R^{(2)} \in
\algH\tensor\algH$ is the universal $R$-matrix.  Our $\U$-module
algebra is not quantum commutative; nevertheless,
\textit{relation~\eqref{equivariance} is satisfied for $v=z^i\Dz^j$
  and $w=z^{m} \Dz^{n}$ if and only if either $n=0$ or $i=0$ or $|i +
  m - j - n|\geq p$%%or $j + n - i - m\geq p$
}.

Returning to matrices and representing commutation
relations~\eqref{the-relation} as\footnote{We do not reduce the
  expressions using that $\q^{p}=-1$ and $\qint{p-i}=\qint{i}$ when
  the unreduced form helps to see a pattern.}
\begin{footnotesize}%
  \begin{gather}
    \label{zDz-matrices}
    \nsize{z={}}
    \begin{pmatrix}
      0&\hdotsfor{3}&0\\
      1&0&\hdotsfor{2}&0\\
      0&1&0&\dots&0\\
      \vdots&&\ddots&\ddots&\vdots\\
      0&\hdotsfor{2}&1&0
    \end{pmatrix},\qquad
    \nsize{\Dz=(\q-\q^{-1})}
    \begin{pmatrix}
      0 & 1 & \hdotsfor{2} & \kern-12pt0\\
      0 & 0 & \;\q^{-1}\qint{2}\!\! & \hdots & \kern-12pt0\\
      \vdots&  & \ddots & \ddots & \kern-9pt\vdots \\
      0 & \hdotsfor{2}  &   0 & \q^{2-p}\qint{p-1}\\
      0 & \hdotsfor{2}  & 0 & \kern-12pt 0
    \end{pmatrix},
    \\
    \intertext{\normalsize we have one of the ``matrix
      curiosities''---\,\textit{integers rather than $q$-integers} in
      the matrix representation of~\eqref{the-t}:} \nsize{t=
      (\q-\q^{-1})}
    \begin{pmatrix}
      0&0&\hdotsfor{3}&\kern-6pt 0\\
      0&1&0&\hdotsfor{2}&\kern-6pt 0\\
      0&0&2&0&\hdotsfor{1}&\kern-6pt 0\\
      \vdots&&&\ddots&&\kern-6pt \vdots\\
      0&\hdotsfor{2}&0&p-2&\kern-6pt 0\\
      0&\hdotsfor{3}&0&\kern-6pt p-1
    \end{pmatrix}.
  \end{gather}%
\end{footnotesize}%

Next, it turns out that a differential calculus can be developed for
our algebra $\bCzd$ such that the differential (satisfying the
``classical'' Leibnitz rule!) commutes with the quantum group action.
Let $\oC_{\q}[\dz,\dDz]$ be an ``odd'' counterpart of
$\bCzd$\,---\,the algebra on $\dz$ and $\dDz$ with the relations
$\dz^2=0$, $\dDz^2=0$, and $\dDz\,\dz=-\q^{-2}\,\dz\,\dDz$.  The new
variables are to be considered the differentials of the
``coordinates,'' $\dz=d(z)$ and $\dDz=d(\Dz)$.\footnote{If our $\Czd$
  is relabeled as $\oC^{2|0}_{\q}[z,\Dz]$, then its ``odd''
  counterpart is to be denoted as~$\oC^{0|2}_{\q}[\dz,\dDz]$; we use
  the simpler notation for brevity.}

\medskip
\noindent\textbf{Theorem} (continued)\textbf{.}
\begin{it}
  \begin{enumerate}\setcounter{enumi}{5}
  \item A quotient of $\;\bCzd\tensor\oC_{\q}[\dz,\dDz]$ can
    be\pagebreak[3] endowed with the structure of a differential
    $\U$-module algebra $(\OCzd, d)$ that is a quantum de~Rham complex
    of~$\;\bCzd$.
  \end{enumerate}
\end{it}
The notation $\OCzd$ assumes that $\bCzd$ is the algebra of
$\;0$-forms.  The exact formulas defining the quotient and the
$\U$-action are given in Sec.~\ref{sec:calculus} below.  

As an illustration of the action of the differential $d$ on the module
structure, we note that the unity in $\repP^+_1$, Eq.~\eqref{repPp1},
is annihilated, and therefore $\repP^+_1$ is not preserved by~$d$.  On
the other hand, there is another, not $d$-closed element in
$\Omega^1\bCzd$ in the same grade as~$dt$, \textit{and} elements in
the cohomology of~$d$, which together with $d(\repP^+_1)$ arrange into
the direct sum of two $\U$-modules
\begin{multline}\label{two-modules}
  \xymatrix@=12pt@C=5pt{
    z^{p-1}\,\dz
    \ar^F[dr]&&&&&&
    %% \displaystyle\sum_{i=1}^{p - 1}\q^{1 - i} z^{i - 1}\dz\Dz^{i}    
      \displaystyle\smash[t]{\sum_{i=1}^{p - 1}}
      \fffrac{1}{\qint{i}}d(z^{i})\,\Dz^{i}
    \ar_(.75)E[dl]
    \\
    &z^{p-2}\,\dz
    &\kern-6pt\rightleftarrows \ldots \rightleftarrows\kern-6pt
    &z\,\dz
    &\kern-10pt\rightleftarrows\kern-10pt
    &\dz&
  }
  \\
  \mbox{\Large$\oplus$}
  \\
  \xymatrix@=12pt@C=5pt{
    \displaystyle\smash[t]{\sum_{i=1}^{p - 1}}
    \fffrac{1}{\qint{i}}z^{i}\,d(\Dz^{i})
    \ar^(.6)F[dr]
    &&&&&&\Dz^{p-1}\,\dDz\ar_E[dl]
    \\
    &\dDz
    &\kern-10pt\rightleftarrows\kern-10pt
    &\Dz\,\dDz
    &\kern-6pt\rightleftarrows \ldots \rightleftarrows\kern-6pt
    &\Dz^{p-2}\,\dDz
  }
\end{multline}
where $d(z^{i}) = \q^{1-i}\qint{i}\,z^{i-1}\,\dz$ and $d(\Dz^{i}) =
\q^{i-1}\qint{i}\,\Dz^{i-1}\,\dDz$.  \textit{The ``corners''
  $z^{p-1}\,\dz$ and $\Dz^{p-1}\,\dDz$ are in the cohomology of the
  differential}.

\subsection{Motivation and some (un)related approaches}
Our interest in the quantum group $\U=\UresSL2$ and related objects
stems from its occurrence in logarithmic conformal field
theories~\cite{[FGST],[FGST2]} (also see a similar quantum group
structure in~\cite{[FGST3],[FGST-q]}, a review in~\cite{[S-q]}, and a
further development in~\cite{[G]}).\footnote{On the subject of
  logarithmic $(p,1)$ models, without attempting to be complete in any
  way, we note the pioneering works~\cite{[K-first],[GK1],[GK2]}
  (where, in particular, the symmetry of the model\,---\,the
  \textit{triplet algebra}\,---\,was identified),
  reviews~\cite{[G-alg],[F-bits]} of the early stages, ``logarithmic
  deformations'' in~\cite{[FFHST]}, the definition of the triplet
  algebra $W(p)$ at general $p$ as the kernel of a screening and the
  fusion algebra of the $2p$ irreducible
  $W(p)$-representations~\cite{[FHST]} (also see~\cite{[FK]}), the
  study of $W(p)$ with the aid of Zhu's algebra~\cite{[AM-triplet]},
  interesting recent advances in~\cite{[FG],[GR1],[GR2],[G]}, and, of
  course, the numerous references therein.}  But this particular
version of the quantum $\SL2$ actually made its first appearance much
earlier; a regrettable omission in (the arXiv version of)~\cite{[S-q]}
was paper~\cite{[AGL]}, where the regular representation of $\U$ was
elegantly described in terms of the even subalgebra of a matrix
algebra times a Grassmann algebra on two generators for each block
(also see~\cite{[Coq],[DNS],[CGT]} for a very closely related quantum
group at~$p=3$; our quantum group was also the subject of attention
in~\cite{[Ar],[FHT]}).

The correspondence between $\U$ and the $(p,1)$ logarithmic conformal
field models, which is a version of the Kazhdan--Lusztig
duality~\cite{[KL]},\pagebreak[3] extends not only to the
representation theories but also to modular group actions, the modular
group action generated from the characters of the $W(p)$ algebra being
isomorphic to that on the quantum group
center~\cite{[FGST],[FGST2],[FGST3],[FGST-q]}.  But algebraic
structures on $\U$-modules have not been investigated in the
Kazhdan--Lusztig context.  Relations~\eqref{the-relation},
\eqref{^p-relations} are in fact a quantum-group counterpart of the
``hidden'' quantum-group symmetry of the $(p,1)$ logarithmic conformal
model (see~\bref{sec:parafermions} below).

On the other hand, commutation relation~\eqref{the-relation} can be
compared with the (considerably more general) setting of quantum Weyl
algebras~\cite{[WZ],[Zh2],[Zh3]}.  There, one considers the defining
relations (the $\dd^j$ are not powers of an element but different
elements)
\begin{equation*}
\begin{gathered}
  \sum R_{ij}^{kl}x_k x_l = q x_i x_j,\\
  \dd^j x_i = \delta_{i}^{j} + q \sum R^{jk}_{il}x_k \dd^l,\\
  \sum R_{kl}^{ij}\dd^k\dd^l = q \dd^i \dd^j,
\end{gathered}\quad 1\leq i,j,\dots\leq n,
\end{equation*}
where $R$ is an $n^2\times n^2$ matrix solution of the Yang--Baxter
equation \textit{and} the Hecke relation.  For the ``$g\ell_n$''
$R$-matrix, in particular,
\begin{gather*}
  \dd^i x_i = 1 + q^2 x_i \dd^i +(q^2-1)\sum_{j>i}x_j \dd^j,
\end{gather*}
which in the case $n=1$ (of little interest in the general theory of
quantum Weyl algebras) becomes
\begin{gather*}
  \dd x = 1 + q^2 x \dd.
\end{gather*}

Our relation~\eqref{the-relation} involves $\q - \q^{-1}$ instead of
unity, which is dictated by the $\U$-module algebra property, with
$\U=\UresSL2$ being our main, initial object (in contrast to quantum
Weyl algebras, where the ``$\dd\,x$--$x\,\dd$'' relations are
considered primary and then quantum enveloping algebras generated by
the $x_i\dd^j$ are studied; also, our $R$-matrix does not satisfy the
Hecke relation).

\subsection{``Parafermionic statistics''}\label{sec:parafermions}
\subsubsection{}
Relations~\eqref{the-relation} and~\eqref{^p-relations} take a
``fermionic'' form for $p=2$:
\begin{equation*}
   \{\Dz,\,\Dz\}=0,\quad \{z,\,z\}=0,\quad\{\Dz\,,z\}=2i,
\end{equation*}
where $\{~,~\}$ is the anticommutator.\footnote{These three
  anticommutators are not unrelated to, but must be clearly
  distinguished from the relations in the $\U$ algebra itself at
  $p=2$, which can be written as $\{E,\,E\}=0$, $\{\tilde F,\,\tilde
  F\}=0$, and $\{E,\tilde F\}=\frac{1}{2i}(1-K^2)$ for $\tilde F=K
  F$.}  This ``fermionic statistics'' (i.e., Clifford-algebra
relations) is very well known to be relevant to the simplest
logarithmic conformal field theory model in the $(p,1)$ family, the
$(2,1)$ model~\cite{[GK1],[GK2]}, whose dual quantum group is our $\U$
at $p=2$ ($\q=i$): this model is described by ``symplectic
fermions''\,---\,conformal fields defined on the complex plane that
satisfy the fermionic commutation relations~\cite{[K-sy]}.  For
general~$p$, the $(p,1)$ logarithmic model corresponds under the
Kazhdan--Lusztig duality just to $\U$ at $\q=e^{\frac{i\pi}{p}}$, and
relations~\eqref{the-relation} and~\eqref{^p-relations} are a
generalization of the fermionic statistics.

\subsubsection{Manifestly quantum-group-invariant description of LCFTs}
For $p>2$, an important problem is to describe the $(p,1)$ logarithmic
conformal models in \textit{manifestly quantum-group-invariant terms}.
The idea of an explicit quantum group symmetry was (somewhat
implicitly) expressed in~\cite{[FGST2]}, where the Fermi statistics
realized for $p=2$ was predicted to extend for general~$p$ to a
``parafermionic''\footnote{The word ``parafermionic'' is somewhat
  overloaded here (and, in particular, is not related to the
  parafermions discussed in the context of logarithmic conformal field
  theories in~\cite{[S-branching]}); although its use is motivated by
  the discussion in~\cite{[Sm]}, ``anyonic'' might be a better
  choice.} statistics on $p-1$ pairs of fields,
%% (\textit{pairs} because the essence of quantization is that for each
%% ``variable,'' there is a ``derivative'' in it), 
which would also allow realizing projective modules over the triplet
algebra.

Relations~\eqref{the-relation} and~\eqref{^p-relations} suggest this
general-$p$, ``parafermionic'' statistics of the $(p,1)$ logarithmic
conformal field theory models.  To realize it, we introduce $p-1$
pairs of fields $\upzeta^m(w)$ and $\updelta^m(w)$, $m=1,\dots,p-1$,
carrying the same $\U$ representation as the $z^m$ and $\Dz^m$, and
set $\updelta^0(w)=\upzeta^0(w)=1$ (here, $w$ is a coordinate on the
complex plane).  The $\upzeta^m(w)$ and $\updelta^m(w)$ have conformal
weight zero.  With~\eqref{repPp1} rewritten in terms of the fields,
\begin{equation}\label{fieldsPp1}
  \kern-10pt\xymatrix@=12pt@C=6pt{
    &&&\upLambda(w)=\displaystyle
    \smash{\sum_{n=1}^{p-1}}\fffrac{1}{\qint{n}}\upzeta^n\updelta^n(w)
    \ar^(.6){F}[dr]
    \ar_(.6){E}[dl]
    \\  
    \upzeta^{p-1}(w)
    &
    \kern-10pt\rightleftarrows \ldots \rightleftarrows\kern-10pt
    &\upzeta^1(w)
    \ar_{F}[dr]
    &{}
    &\updelta^1(w)
    \ar^{E}[dl]
    &\kern-10pt\rightleftarrows \ldots \rightleftarrows\kern-10pt
    &    
    \updelta^{p-1}(w),
    \\
    &&&
    1
  }
\end{equation}
it follows that $\upLambda(w)$ is a \textit{logarithmic partner} of
the identity operator (cf.~\cite{[FGST2]}).

\subsubsection{First-order ``parafermionic'' systems}
The differential $\diff$ acting on conformal fields (in terms of the
coordinate $w$ on the complex plane) commutes with the quantum group
action on the fields.  This is also the case with $d$ in the de~Rham
complex $\OCzd$ on the algebraic side, and we do not therefore
distinguish the two differentials.  It is instructive to
rewrite~\eqref{two-modules} in terms of fields.  For this, we
introduce the fields $\upeta^n(w)$ as
\begin{equation}\label{eq:eta}
  \diff\updelta^n(w)=\qint{n}\q^{n-1}\upeta^n(w),\quad n=1,\dots,p-1.
\end{equation}
Then the fields $\upzeta^n(w)$ and $\upeta^n(w)$
%% , with the OPEs
%% \begin{equation*}
%%   \upeta^m(z)\,\upzeta^n(w)=\mfrac{\delta^{m,n}\q^{1-n}}{z-w},
%%   \qquad
%%   \upzeta^m(z)\,\upeta^n(w)=-\mfrac{\delta^{m,n}\q^{n+1}}{z-w},
%% \end{equation*}
constitute a $(p-1)$-component ``parafermionic'' version of the
first-order fermionic system.  The field realization of one of the
modules in~\eqref{two-modules} is
\begin{equation}\label{dfieldsPp1}
  \xymatrix@=12pt@C=6pt{
    \JJ(w)\equiv\displaystyle
    \smash{\sum_{n=1}^{p-1}}
%%     \fffrac{1}{\qint{n}}\upzeta^n\diff\updelta^n(w)
    \q^{n-1}\upzeta^n\upeta^n(w)
    \ar^(.65){F}[dr]
    &&&&%%%s_+(w)=
    e^{\sqrt{2p}\,\varphi(w)}
    \ar_(.6)E[dl]
    \\  
    {}
    &\upeta^1(w)
    &\kern-10pt\rightleftarrows \ldots \rightleftarrows\kern-10pt
    &    
    \upeta^{p-1}(w),
  }
\end{equation}
where $\varphi(w)$ is introduced as $\diff\varphi(w)=\II(w)$,
\begin{equation}\label{eq:II}
  \II(w)=
  \sum_{n=1}^{p-1}\fffrac{1}{\qint{n}}\diff\upzeta^n\updelta^n(w),
\end{equation}
and $e^{\sqrt{2p}\,\varphi(w)}$ is the ``screening current''---\,a
field on the complex plane such that taking the first-order pole in
the OPE with it defines a screening operator.
%% (A field realization of the other module in~\eqref{two-modules}
%% requires using the other, ``dual,'' first-order ``parafermionic''
%% system of $\diff\upzeta^m(w)$ and $\updelta^m(w)$.)

In the Appendix, we consider the ``parafermionic'' fields,
generalizing free fermions, in more detail.  The extension from
fermions ($p=2$) to ``parafermions'' (general $p$) is also closely
related to an algebraic pattern that we now recall.

\subsubsection{}
On the algebraic side, just the same ideology of a ``quantum''
generalization of fermionic commutation relations was put forward
in~\cite{[CW]}.  The guiding principle was that of quantum
commutativity, which ``encompasses commutativity of algebras and
superalgebras on one hand and the quantum planes and superplanes on
the other''~\cite{[CW]}.  A number of examples, including the quantum
plane, were considered in that paper.  We also note the related points
in~\cite{[Lu-alg],[BM]}; in particular, a free algebra on the $\xi_i$
with the relations
\begin{equation*}
  \xi_i \xi_j=R_{ij}^{mn}\xi_m\xi_n
\end{equation*}
(where $R_{ij}^{mn}$ is again a matrix solution of the Yang--Baxter
equation) is quantum commutative in the category of Yetter--Drinfeld
modules over the bialgebra obtained from $R$ via the
Faddeev--Reshetikhin--Takhtajan construction, i.e., the free algebra
on the $c^i_j$ with the relations
\begin{equation*}
  R_{mn}^{ij} c_k^n c_l^m = R_{lk}^{mn} c_m^i c_n^j.
\end{equation*}
(A partly reversed logic has also been used to find solutions of the
Yang--Baxter equation from Yetter--Drinfeld (``Yang--Baxter'')
modules~\cite{[LR-alg]}).

For us, as in~\cite{[FGST],[S-q]}, the quantum group $\U$ is not
reconstructed from some $R$-matrix but is given as the primary object
(originally determined by the Kazhdan--Lusztig duality with
logarithmic conformal field theory).  We then define an algebra on
$\Dz$ and $z$ with the crucial commutation relation given
by~\eqref{the-relation}, verify the $\U$-module property, and find the
algebra decomposition.  Alternatively, it could be possible to start
with the appropriate sum of (the ``odd'') projective quantum-group
modules and conclude somehow that it is an associative algebra; from
this perspective, the results in this paper include finding the
generators ($\Dz$ and $z$) and relations (\eqref{the-relation}
and~\eqref{^p-relations}) in this associative algebra.

\subsection{$\UresSL2$} We quote several results about our quantum
group $\U$ in~\eqref{the-qugr}, \eqref{the-constraints} \cite{[FGST]}.

The Hopf algebra structure of $\U$ is given by
\begin{gather*}
  {}\Delta(E)= E\tensor K + 1\tensor E,\quad
  \Delta(K)=K\tensor K,\quad
  \Delta(F)=F\tensor 1 + K^{-1}\tensor F,
  \\
  \epsilon(E)=\epsilon(F)=0,\quad\epsilon(K)=1,
  \\
  S(E)=-E K^{-1},\quad S(K)=K^{-1},\quad S(F)=-K F.
\end{gather*}
Therefore, in particular, the condition for an algebra $V$ carrying a
representation of $\U$ to be a $\U$-module algebra is that
\begin{align*}
  E(v w) &= (E v)(Kw) + v(Ew),\\
  K(v w) &= (Kv)(Kw),\\
  F(v w) &= F(v) w + (K^{-1}v) Fw
\end{align*}
for $v,w\in V$.

%% The quantum group $\U$ has $2p$ irreducible
%% representations~\cite{[FGST]}, denoted as $\repX^{\pm}_r$, $1\leq
%% r\leq p$, with $\dim\repX^\pm_r=r$.  Their projective covers are
%% denoted by $\repP^{\pm}_r$.  They 
For each $1\leq r\leq p-1$, the projective module $\repP^\pm_r$ that
covers the irreducible representation $\repX^\pm_r$ has
dimension~$2p$; for $r=p$, the projective module coincides with the
irreducible representation~\cite{[FGST]}.  The structure of projective
$\U$-modules is made very explicit in~\cite{[FGST]} and all the
indecomposable representations of $\U$ are classified
in~\cite{[FGST2]} (they can also be deduced from a more general
approach in~\cite{[Erd]}).

The universal $R$-matrix for $\U$ was found in~\cite{[FGST]}:
\begin{equation}\label{the-R}
  R =\ffrac{1}{4p}\sum_{i=0}^{p-1}\sum_{\ a,b=0}^{4p-1}
  \ffrac{(\q-\q^{-1})^i}{\qfac{i}}\,
  \q_{\vphantom{-}}^{\frac{i(i-1)}{2} + i(a-b)-\frac{a b}{2}}
  E^i K_{\vphantom{e}}^{\frac{a}{2}}\tensor F^i K_{\vphantom{e}}^{\frac{b}{2}}.
\end{equation}
Strictly speaking, this is not an $R$-matrix \textit{for the quantum
  group $\U$} because of the half-integer powers of $K$ involved here.
This was discussed in detail in~\cite{[FGST]}; an essential point is
that the so-called monodromy matrix $M=R_{21}R$ \ \textit{is} an
element of $\U\tensor\U$; in our present context, a similar effect is
that we do not have to introduce half-integer powers of $\q$ because
all eigenvalues of $K$, which are $\q^n$, occur with even~$n$ here.
Thus, whenever $K$ acts by $\q^{2n}=e^{\frac{2i\pi n}{p}}$, $0\leq
n\leq p-1$, we set $K^{\half}$ to act by $\q^{n}=e^{\frac{i\pi
    n}{p}}$.

\medskip

%% In what follows, $\oN$ denotes the set of \textit{nonnegative}
%% integers.  
The $\q$-integers $\qint{n}$ were defined above, and we also use the
standard notation
\begin{equation*}
  \qfac{n}=\qint{1}\qint{2}\dots\qint{n},\quad
  \qbin{m}{n}=\ffrac{\qfac{m}}{\qfac{m-n}\qfac{n}}
\end{equation*}
(with $\qbin{m}{n}=0$ for $m<n$).

Most of the material that relates to proving the theorem is collected
in Sec.~\ref{sec:qline-all}; some remarks about the matrix realization
are in Sec.~\ref{sec:matrices}; the extension to a differential
algebra (the quantum de Rham complex of $\bCzd$) is given in
Sec.~\ref{sec:calculus}.  Implications of the ``para\-fermionic
statistics'' (i.e., of the commutation relations in our $\U$-module
algebra) for conformal field theory are discussed in the Appendix.

\section{$q$-Differential operators on the line at a root of
  unity}\label{sec:qline-all}
We consider the ``quantum line'' $\oC[z]$, i.e., the space of
polynomials in one variable; the ``quantum'' (i.e., noncommutative)
features are to be seen not in the polynomials themselves but in
operators acting on them (and therefore a \textit{quantum} line is a
certain abuse of speech unless it is endowed with some extra
structures).

\subsection{$z\,$, $\Dz\!$, and a $\U$ action}
\subsubsection{}\label{inf-snake}
We define the $\U$ action on $\oC[z]$ as
\begin{alignat*}{2}
  E\,z^m &=-\q^m \qint{m} z^{m+1},%%&\quad m&\geq1,
  \\
  K\,\,z^m &= \q^{2m}\,z^m,%%&\quad m&\geq0,
  \\
  F\,z^m &= \qint{m} \q^{1-m}\,z^{m-1}.%%&\quad m&\geq0.
\end{alignat*}
That this is indeed a $\U$ action is easy to verify.  Clearly, the
unity spans a submodule.  The module structure of $\oC[z]$ is given by
the diagram (an infinite version of the zigzag modules considered
in~\cite{[FGST2]}; see also~\cite{[Erd]})
\begin{equation*}
  \xymatrix@=12pt@C=2pt{
    &
    \ldots
    &
    z^{2p+1}
    \ar^{F}[dr]
    &
    &z^{2p-1}
    \ar_{E}[dl]
    &\kern-6pt\rightleftarrows\kern-6pt
    &\ldots
    &\kern-6pt\rightleftarrows\kern-6pt
    &z^{p+1}
    \ar^{F}[dr]
    &&z^{p-1}
    \ar_{E}[dl]
    &\kern-6pt\rightleftarrows\kern-6pt
    &\ldots
    &\kern-6pt\rightleftarrows\kern-6pt
    &z
    \ar^{F}[dr]
    &
    \\
    \ldots&&&z^{2p}&&&&&&z^p&&&&&&\ 1
    }
\end{equation*}
where the horizontal $\rightleftarrows$ arrows denote the action by
$F$ (to the right) and $E$ (to the left) up to nonzero factors.

\subsubsection{}The formulas above actually make $\oC[z]$ into a
$\U$-module algebra.  The elementary proof of this fact amounts to the
calculation
\begin{multline*}
  \sum E'(z^m)\,E''(z^n)
  = z^m E(z^n) + E(z^m) K(z^n)
  = -\q^n \qint{n} z^m\,z^{n+1} 
  - \q^m \qint{m} z^{m+1} \q^{2n} z^n\\
  =-(\q^n \qint{n} + \q^{m+2n} \qint{m})z^{m+n+1}=
  -\q^{m + n}\qint{m + n}z^{m+n+1} = E(z^{m+n}),
\end{multline*}
and similarly for $F$.
%% \begin{multline*}
%%   F'(z^m)\,F''(z^n)
%%   = K^{-1}(z^m)  F(z^n) + F(z^m) z^n
%%   = \q^{-2m} z^m\,\qint{n} \q^{1-n} z^{n-1}
%%   \qint{m} \q^{1-m} z^{m-1} z^n\\
%%   =(\q^{1 - n - 2 m}\,\qint{n}
%%   + \q^{1-m} \qint{m})z^{m+n-1}
%%   =\q^{1 - m - n}\qint{m + n}z^{m+n-1} = F(z^{m+n}).
%% \end{multline*}

\subsubsection{}\label{dz-relations}
We next introduce a ``dual'' quantum line $\oC[\Dz]$ of polynomials in
a $q$-derivative operator $\Dz$ on $\oC[z]$, and postulate the
commutation relation~\eqref{the-relation}.  A simple exercise in
recursion then leads to the relations
\begin{gather*}
  \Dz^m\,z^n=\sum_{i\geq0}\q^{-(2 m - i) n + i m - \frac{i(i-1)}{2}}
  \qbin{m}{i} \qbin{n}{i} \qfac{i} \left(\q - \q^{-1}\right)^i
  z^{n-i}\Dz^{m-i}
\end{gather*}
(because of the $\q$-binomial coefficients, the range of $i$ is
bounded above by $\min(m,n)$).  Anticipating the result
in~\eqref{repPp1}, we thus have the commutation relations between
elements of the projective module~$\repP^+_1$.

We let $\Czd$ denote the associative algebra generated by $z$ and
$\Dz$ with relation~\eqref{the-relation}.  In the formulas such as
above, $z$ is the operator of multiplication by $z$, and all
expressions like $\Dz^m z^n$ are understood accordingly; as regards
the \textit{action} of $\Dz$ on $\oC[z]$, it is given by the $i=m$
term in the last formula:
\begin{gather*}
  \Dz^m(z^n)=\q^{m (m - n) +\frac{m(m-1)}{2}}
  \qbin{n}{m} \qfac{m} \left(\q - \q^{-1}\right)^m
  z^{n-m}.
\end{gather*}

\subsubsection{}\label{derivations}
It follows from~\bref{dz-relations} that
\begin{align*}
  \Dz^m z &= \q^{-2m} z \Dz^{m} + \q(1-\q^{-2m})\Dz^{m-1}
  \\[-6pt]
  \intertext{and}
  \smash[t]{\Dz\,z^n}
  &={}\smash[t]{\q^{-2n}\,z^n\,\Dz + \q(1 - \q^{-2n})z^{n-1}},
\end{align*}
and hence \textit{$\Dz^p$ and $z^p$ are central in $\Czd$}.

We note that Lusztig's trick of resolving the ambiguities in
$X\mapsto(\Dz^p X - X \Dz^p)/\qint{p}$ and $X\mapsto(z^p X - X
z^p)/\qint{p}$ then yields two \textit{derivations} of $\Czd$:
\begin{align*}
  \mathfrak{d}:{}\
  &\begin{aligned}
    z^n&\mapsto
    \sum_{i=1}^n(-1)^i\q^{i n - \frac{i(i-1)}{2}}
    \ffrac{\qint{n-i+1}\dots\qint{n}}{\qint{i}}\left(\q - \q^{-1}\right)^i
    z^{n-i}\Dz^{p-i},\\
    \Dz^n&\mapsto0
  \end{aligned}
  \\[-6pt]
  \intertext{and}
  \mathfrak{z}:{}\
  &\begin{aligned}
    z^n&\mapsto0,\\[-4pt]
    \Dz^n&\mapsto
    -\sum_{i=1}^n(-1)^i\q^{i n - \frac{i(i-1)}{2}}
    \ffrac{\qint{n-i+1}\dots\qint{n}}{\qint{i}}\left(\q - \q^{-1}\right)^i
    z^{p-i}\Dz^{n-i}.
  \end{aligned}
\end{align*}

\subsubsection{} We next define the $\U$ action on $\oC[\Dz]$ as
\begin{align*}
  E\,\Dz^n &= \q^{1-n}\qint{n}\Dz^{n-1},\\
  K\,\Dz^n &= \q^{-2 n} \Dz^n,\\
  F\,\Dz^n &= -\q^n \qint{n} \Dz^{n+1}.
\end{align*}
Clearly, this \textit{is} a $\U$ action, the unity $1=\Dz^0$ is a
submodule, and this action makes $\oC[\Dz]$ into a $\U$-module
algebra.

\begin{lemma}
  $\Czd$ is a $\U$-module algebra.
\end{lemma}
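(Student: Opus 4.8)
The plan is to realize $\Czd$ as a quotient of a free object and to show that the single defining relation generates a $\U$-submodule, so that the quotient inherits the module-algebra structure. Since $\oC[z]$ and $\oC[\Dz]$ have just been shown to be $\U$-module algebras, their free product $\oC\langle z,\Dz\rangle$ (the coproduct of $\oC[z]$ and $\oC[\Dz]$ in the category of $\U$-module algebras) is again a $\U$-module algebra: the $\U$-action extends to arbitrary words in $z$ and $\Dz$ by the compatibility conditions recorded above, e.g.\ $E(\Dz\,z)=(E\Dz)(Kz)+\Dz(Ez)$, and coassociativity of $\Delta$ makes the iterated action unambiguous. Moreover the $\U$-relations $E^p=F^p=0$, $K^{2p}=1$ of \eqref{the-qugr}--\eqref{the-constraints} hold automatically on the free product because $\oC[z]$ and $\oC[\Dz]$ are already genuine $\U$-modules. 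Then $\Czd=\oC\langle z,\Dz\rangle/I$, where $I$ is the two-sided ideal generated by $r:=\Dz\,z-\q^{-2}z\,\Dz-(\q-\q^{-1})$, and the lemma reduces to the assertion that $I$ is a $\U$-submodule.

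For the reduction step I would use the iterated coproduct to cut $\U$-stability of $I$ down to finitely many checks on $r$ itself. Acting by $h$ on a typical element $a\,r\,b\in I$ produces $\sum h^{(1)}(a)\,h^{(2)}(r)\,h^{(3)}(b)$, and for $h\in\{E,F,K,K^{-1}\}$ the middle factors $h^{(2)}$ that can occur (read off from $\Delta(E)=E\tensor K+1\tensor E$, $\Delta(F)=F\tensor 1+K^{-1}\tensor F$, $\Delta(K)=K\tensor K$) again lie in $\{E,F,K^{\pm1},1\}$. Hence, by induction on word length in the algebra generators of $\U$, it suffices to verify $E(r),F(r),K(r)\in I$, with $K^{-1}(r)=r$ following once $K(r)=r$ is known.

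The actual computations use only the module-algebra rules together with the known generator actions $Ez=-\q z^2$, $Kz=\q^2z$, $Fz=1$, $E\Dz=1$, $K\Dz=\q^{-2}\Dz$, $F\Dz=-\q\Dz^2$. The case $K(r)=r$ is immediate from $K(\Dz z)=\Dz z$ and $K(z\Dz)=z\Dz$. For $E(r)$ and $F(r)$ one applies the rules and then normal-orders the resulting degree-$3$ expressions ($\Dz z^2$, respectively $\Dz^2 z$) by applying \eqref{the-relation} twice; I expect the outcome to be the manifestly ideal-valued identities $E(r)=-\q\,r\,z-\q^{-1}z\,r$ and $F(r)=-\q\,\Dz\,r-\q^{-1}r\,\Dz$. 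The point is that in each case the spurious degree-$1$ terms (a scalar multiple of $z$, resp.\ of $\Dz$) cancel, and this cancellation is precisely what pins the constant in \eqref{the-relation} to be $\q-\q^{-1}$.

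The only genuine work is this pair of normal-ordering calculations: pushing a single $\Dz$ past $z^2$ (resp.\ a $z$ past $\Dz^2$) and confirming the vanishing of the leftover scalar term. Everything else is formal — well-definedness of the action on the free product is inherited from the two factors, and the descent to $\Czd$ is exactly the reduction above — so once $E(r),F(r),K(r)\in I$ are checked, $I$ is a $\U$-submodule and $\Czd$ is a $\U$-module algebra.
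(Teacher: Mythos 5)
Your proposal is correct and takes essentially the same route as the paper: the paper also extends the action to the algebra generated by $z$ and $\Dz$ and reduces the lemma to checking that $E$ and $F$ (with $K$ trivial) map the defining relation $r=\Dz\,z-\q^{-2}z\,\Dz-(\q-\q^{-1})$ into the ideal it generates, via the same degree-$3$ normal-ordering computation; your predicted identities $E(r)=-\q\,r\,z-\q^{-1}z\,r$ and $F(r)=-\q\,\Dz\,r-\q^{-1}r\,\Dz$ do hold, the paper instead writing $E(r)\equiv0$ and $F(r)\equiv0$ modulo the relation. The only difference is bookkeeping: you spell out the free-product and iterated-coproduct argument for why checking the generators of $\U$ on $r$ suffices, which the paper leaves implicit.
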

\noindent
The proof amounts to verifying that $E$ and $F$ preserve the ideal
generated by the left-hand side of~\eqref{the-relation}:
\begin{multline*}
  E(\Dz\, z - (\q - \q^{-1}) - \q^{-2} z\,\Dz)
  =E(\Dz)\, K z + \Dz\, E(z) 
  - \q^{-2}(E(z)\,K(\Dz) + z\,E(\Dz))\\*
  =\q^2 z - \q \Dz\, z^2 
  - \q^{-2}(-\q z^2\,\q^{-2}\Dz + z) = 0
\end{multline*}
by \bref{dz-relations}. \ Similarly,
\begin{multline*}
  F(\Dz\, z - (\q - \q^{-1}) - \q^{-2} z\,\Dz)
  =
  K^{-1}(\Dz)\,F(z) + F(\Dz) z
  - \q^{-2}(K^{-1}(z) \, F\Dz + F(z)\,\Dz)\\
  = \q^2 \Dz - \q\,\Dz^2 z
  - \q^{-2}(-\q^{-2} z\,\q \Dz^2 + \Dz) = 0
\end{multline*}
by \bref{dz-relations} as well.

\subsubsection{}
As noted in the Introduction, the quantum commutativity property,
Eq.~\eqref{equivariance}, is violated for our $\U$-module algebra; for
example, we have
\begin{gather*}
  \sum R^{(2)}(\Dz)\,R^{(1)}(z)
  = \sum_{j=0}^{p - 1}\gamma_j\,z^{j} \Dz^{j}
\end{gather*}
with the nonzero coefficients
\begin{gather*}
  \gamma_j=\sum_{i=\max(j-1, 0)}^{j + p - 2}
  \!\!\!(\q - \q^{-1})^{2 i - j + 1}
  \q^{-i^2  - 4 i - 2 - \frac{1}{2} (j^2 + 3 j)  - i j} {\qbin{i + 1}{j}}^2
  \qfac{i}\qfac{i  + 1 - j}.
\end{gather*}
Yet in the basis of monomials $z^m\Dz^n$, Eq.~\eqref{equivariance}
holds in the cases noted above, which in particular include all pairs
$v=z^i\Dz^j$, $w=z^{m}$ and all pairs $v=\Dz^j$, $w=z^{m}\Dz^{n}$, for
which all the $\Dz^n$ in $wv$ stand to right of the $z^m$.  For
example, with the $R$-matrix in~\eqref{the-R}, we calculate
\begin{gather*}
  R(\Dz\tensor z)
%%   =\ffrac{1}{4p}\sum_{i=0}^{p-1}\sum_{\ a,b=0}^{4p-1}
%%   \ffrac{(\q-\q^{-1})^i}{\qfac{i}}\,\q^{\frac{i(i-1)}{2} + i(a-b)-\frac{a b}{2} - a + b}
%%   E^i\Dz \tensor F^i z\\
  =\sum_{i=0}^{1}
  \ffrac{(\q-\q^{-1})^i}{\qfac{i}}\,\q^{\frac{i(i-1)}{2} - 2(i-1)^2 }
  E^i\Dz \tensor F^i z
  =\q^{-2}\Dz\tensor z + (\q-\q^{-1}) 1\tensor 1,
\end{gather*}
and therefore the right-hand side of~\eqref{equivariance} evaluates as
\begin{equation*}
  \sum R^{(2)}(z)\,R^{(1)}(\Dz)= \q-\q^{-1} + \q^{-2} z\,\Dz
  =\Dz\,z.
\end{equation*}
In the commutative subalgebras $\oC[z]$ and $\oC[\Dz]$, even simpler,
\begin{gather*}
  R(z\tensor z)
%%   =\ffrac{1}{4p}\sum_{i=0}^{p-1}\sum_{\ a,b=0}^{4p-1}
%%   \ffrac{(\q-\q^{-1})^i}{\qfac{i}}\,\q^{\frac{i(i-1)}{2} + i(a-b)-\frac{a b}{2} + a + b}
%%   E^i z\tensor F^i z
%%   \\
  =\sum_{i=0}^{1}
  \ffrac{(\q-\q^{-1})^i}{\qfac{i}}\,\q^{\frac{i(i-1)}{2} - 2 (i^2 - 1)}
  E^i z\tensor F^i z = \q^2 z\tensor z + (\q-\q^{-1})(-\q)z^2\tensor 1,
\end{gather*}
which makes~\eqref{equivariance} an identity, and similarly for
$R(\Dz\tensor \Dz)$.

\subsection{The quotient $\bCzd$}
We saw in~\bref{derivations} that $z^p$ and $\Dz^p$ are central in
$\Czd$.  The formulas for the $\U$ action also imply that $E z^p = F
z^p = E \Dz^p = F \Dz^p = 0$.  We can therefore take the quotient of
$\Czd$ by relations~\eqref{^p-relations}.  The quotient $\U$-module
algebra is denoted by $\bCzd$ in what follows.

We note that the derivations in~\bref{derivations} do not descent to
$\bCzd$ because, for example, $\mathfrak{d}(z^p)= p(\q - \q^{-1})\,1$.
%% \begin{equation*}
%%   \ffrac{\Dz^p}{\qint{p}}\,z^p -
%%   z^{p}\ffrac{\Dz^{p}}{{\qint{p}}}
%%   =
%%   \q^{-\frac{p(p-1)}{2}}
%%   \qfac{p-1}
%%   \left(\q - \q^{-1}\right)^p
%%   \,1
%%   = p(\q - \q^{-1})\,1.
%% \end{equation*}

\begin{small}%
\subsection{The $\U$ action on $\oC[z]/z^p$ in terms of
    $q$-differential operators}
This subsection is a digression not needed in the rest of this paper.

\subsubsection{``Scaling'' operator $\mathscr{E}$}\label{muZ}
The operator
\begin{gather*}
  \mathscr{E} = \ffrac{\Dz\,z-z\,\Dz}{\q - \q^{-1}}
  = 1 - \q^{-1} z \Dz,
\end{gather*}
commutes with $z$ and $\Dz$ as
\begin{equation*}
  \mathscr{E} z^{n} = \q^{-2 n}z^{n} \mathscr{E},\quad
  \mathscr{E} \Dz^{n} = \q^{2 n} \Dz^{n} \mathscr{E}.
\end{equation*}
In what follows, when we speak of the \textit{action} of
$q$-differential operators on $\oC[z]$, it is of course understood
that $\mathscr{E}(z^{n}) = \q^{-2 n}z^{n}$.

We also calculate
\begin{align*}
  \mathscr{E}^n &= 1 + \sum_{i=1}^{n}
  \qbin{n}{i}(-1)^{i}\q^{-n i} z^{i} \Dz^{i}.
  \\
  \intertext{In particular, $\mathscr{E}^p = 1 + z^{p}\Dz^{p}$, and
    hence}
  \mathscr{E}^p &= 1\quad\text{in}\quad\bCzd.
\end{align*}
Therefore, $\mathscr{E}$ is invertible in $\bCzd$.  Moreover, it is
easy to see that in $\bCzd$, the above formula for $\mathscr{E}^n$
extends to negative $n$ as
\begin{equation*}
  \mathscr{E}^n = 1
  + \sum_{i=1}^{p-1}\ffrac{\qint{n-i+1}\dots\qint{n}}{\qfac{i}}
  (-1)^{i}\q^{-n i} z^{i} \Dz^{i},\quad n\in\oZ,
\end{equation*}
which thus gives an explicit representation for $\mathscr{E}^{-1}$, in
particular.

The next lemma shows that, as could be expected, the $E$ and $F$
generators acting on $\oC[z]/z^p$ are (almost) given by multiplication
by~$z$ and by a $q$-derivative.
\begin{lemma}
  The $\U$ action on $\oC[z]/z^p$ is given by the $q$-differential
  operators
  \begin{align*}
    E &= \fffrac{1}{\q - \q^{-1}}\, z\, (1 - \mathscr{E}^{-1}),\\
    K &= \mathscr{E}^{-1},\\
    F &= \fffrac{1}{\q - \q^{-1}}\,\Dz.
  \end{align*}
\end{lemma}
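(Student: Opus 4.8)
The plan is to verify the three asserted operator identities by evaluating both sides on the monomial basis $\{z^n\}_{0\leq n\leq p-1}$ of $\oC[z]/z^p$, comparing the $\U$ action recorded in~\bref{inf-snake} with the action of the stated $q$-differential operators.

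First I would handle $K=\mathscr{E}^{-1}$. From $\mathscr{E}=1-\q^{-1}z\Dz$ together with the $m=1$ case of the formula in~\bref{dz-relations}, namely $\Dz(z^n)=\q^{1-n}\qint{n}(\q-\q^{-1})z^{n-1}$, a one-line computation using $\qint{n}(\q-\q^{-1})=\q^n-\q^{-n}$ gives $\mathscr{E}(z^n)=\q^{-2n}z^n$. Thus $\mathscr{E}$ is diagonal, hence invertible on this finite-dimensional space, and $\mathscr{E}^{-1}(z^n)=\q^{2n}z^n=K(z^n)$.

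The identity $F=\frac{1}{\q-\q^{-1}}\Dz$ is then immediate from the same specialization, since $\frac{1}{\q-\q^{-1}}\Dz(z^n)=\q^{1-n}\qint{n}z^{n-1}=F(z^n)$. For $E$ I would compute $(1-\mathscr{E}^{-1})(z^n)=(1-\q^{2n})z^n$, multiply by $z$, and compare with $E(z^n)=-\q^n\qint{n}z^{n+1}$; the two coincide because $-\q^n\qint{n}=(1-\q^{2n})/(\q-\q^{-1})$ straight from the definition of $\qint{n}$.

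I do not expect a substantial obstacle: the lemma is essentially a repackaging of the defining formulas. The only points needing care are the two ends of the basis, and both are self-correcting. At $n=0$ the vanishing $\qbin{0}{1}=0$ gives $\Dz(1)=0$, reproducing $F(1)=0$; at $n=p-1$ the multiplication by $z$ in the $E$-formula produces $z^p=0$, matching $E(z^{p-1})=-\q^{p-1}\qint{p-1}z^p=0$ in the quotient. I would also note explicitly that $\mathscr{E}^{-1}$ is well defined as an operator on $\oC[z]/z^p$ precisely because $\mathscr{E}$ is diagonal on monomials, so that no appeal to the longer expansion of $\mathscr{E}^{-1}$ valid in $\bCzd$ is required.
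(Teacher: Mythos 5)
Your proof is correct and takes essentially the same route as the paper's: the paper's proof likewise consists of checking that the three proposed operators act on the monomials $z^m$ exactly as the $\U$ action of~\bref{inf-snake} prescribes, and states that this suffices. The only cosmetic differences are that the paper certifies $\mathscr{E}^{-1}$ as a $q$-differential operator by invoking the explicit expansion from~\bref{muZ} (where $\mathscr{E}^p=1$ in $\bCzd$) rather than your diagonal-invertibility argument, and it appends an optional ``instructive'' verification of the $\U$ commutation relations, e.g.\ of $[E,F]$, which is not needed for the proof itself.
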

\begin{proof}
  First, by~\bref{muZ}, $E$, $K$, and $F$ are $q$-differential
  operators.  Next, we verify that the right-hand sides of the three
  formulas above act on the $z^{m}$ as desired.  This suffices for the
  proof, but it is actually rather instructive to verify the $\U$
  commutation relations for the above $E$, $K$, and $F$.  For example,
  we have
  \begin{multline*}
    EF - FE = \fffrac{1}{(\q - \q^{-1})^2}\, z (1 - \mathscr{E}^{-1})\,\Dz
    - \fffrac{1}{(\q - \q^{-1})^2}\, \Dz\,z (1 - \mathscr{E}^{-1})\\
    =
    \fffrac{1}{(\q - \q^{-1})^2}\, (1 - \q^{-2}\mathscr{E}^{-1})\,z\,\Dz
    - \fffrac{1}{(\q - \q^{-1})^2}\, \Dz\,z (1 - \mathscr{E}^{-1})
    =\fffrac{\mathscr{E}^{-1} - \mathscr{E}}{\q - \q^{-1}},
  \end{multline*}
  where in the last equality we substitute $z\Dz=\q(1 - \mathscr{E})$
  and $\Dz\,z=\q - \q^{-1}\mathscr{E}$.
\end{proof}
\end{small}

\subsection{Decomposition of \ $\bCzd$}
\ We now decompose the $\,p^2$-dimensional $\,\U$-module $\bCzd$ into
indecomposable representations.

\subsubsection{$\repP^+_1$}
The projective module $\repP^+_1\subset\bCzd$ is identified very
easily.  For $t$ in~\eqref{the-t}, it follows that
\begin{equation*}
  E\pT=z+\q\,z^p\,\Dz^{p-1},\qquad F\pT=\Dz + \q\,z^{p-1}\,\Dz^p.
\end{equation*}
In $\bCzd$, we therefore have the $\repP^+_1$ module realized as shown
in~\eqref{repPp1} (where, again, the horizontal arrows represent the
action of $F$ and $E$ up to nonzero
factors).\enlargethispage{\baselineskip}

\begin{thm}
  As a $\U$-module, $\bCzd$ decomposes as
  \begin{equation*}
    \bCzd=
    \repP^+_1\oplus\repP^+_3\oplus\dots\oplus\repP^+_\nu,
  \end{equation*}
  where $\nu=p$ if $p$ is odd and $p-1$ if $p$ is even.
\end{thm}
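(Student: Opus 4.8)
The plan is to reduce the statement to a computation in the Grothendieck ring together with a splitting argument. The starting point is that multiplication gives a $\U$-module isomorphism $\bCzd\cong A\tensor B$, where $A=\oC[z]/z^p$ and $B=\oC[\Dz]/\Dz^p$ are the two $p$-dimensional submodules spanned by the pure powers $z^m$ and $\Dz^n$. Indeed, the map $z^m\tensor\Dz^n\mapsto z^m\Dz^n$ is a linear isomorphism by the PBW basis, and it intertwines the $\U$-actions precisely because the module-algebra axiom $h(vw)=\sum h'(v)\,h''(w)$ says that $\U$ acts on a product through the coproduct, i.e.\ exactly as on a tensor product. Reading off the action on the pure powers, and using that $\qint{m}=0$ iff $p\mid m$ (so that the unit $1$ is killed by both $E$ and $F$), one identifies $A$ and $B$ as the same indecomposable module, sitting in a non-split sequence with socle $\repX^+_1$ and top $\repX^-_{p-1}$.

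Next I would compute $[\bCzd]=[A]\,[B]=\bigl([\repX^+_1]+[\repX^-_{p-1}]\bigr)^2$ in the Grothendieck ring. The three easy products collapse to $[\repX^+_1]$ and $[\repX^-_{p-1}]$, and the only substantial term is $[\repX^-_{p-1}]^2=[\repX^+_{p-1}]^2$, whose decomposition is the fusion rule for $\U$ recorded in \cite{[FGST]}. A useful bookkeeping remark is that $K$ acts on every monomial $z^m\Dz^n$ by the even power $\q^{2(m-n)}$, so only constituents whose highest weights are even powers of $\q$ can occur; this is exactly what singles out the $\repX^+_r$ and $\repX^-_{p-r}$ with odd $r$, hence the projectives $\repP^+_r$ with odd $r$. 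Carrying out the fusion computation, the class $[\bCzd]$ matches that of $\repP^+_1\oplus\repP^+_3\oplus\dots\oplus\repP^+_\nu$.

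To promote this equality of classes to an isomorphism of modules I would use that $\U$, a finite-dimensional Hopf algebra, is Frobenius, so projective and injective $\U$-modules coincide; consequently any submodule of $\bCzd$ isomorphic to a projective is automatically a direct summand. The module $\repP^+_1$ has already been exhibited explicitly (generated by $t$, with $Et=z$ and $Ft=\Dz$ modulo the vanishing $z^p,\Dz^p$), so it splits off. I would then locate, for each odd $r$ with $3\le r\le\nu$, a weight vector of $K$-weight $\q^{r-1}$, i.e.\ in the span of the $z^m\Dz^n$ with $m-n=(r-1)/2$, generating a submodule isomorphic to $\repP^+_r$, patterned on the generator $t$ of $\repP^+_1$. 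Because the socles $\repX^+_r$ are pairwise non-isomorphic simple modules, splitting them off one at a time keeps the sum direct (a new summand meets the previously extracted ones trivially, since its simple socle is not among theirs, so it embeds into the running complement and, being injective, splits). The count $\sum 2p+[\,p\text{ if }p\text{ odd}\,]=p^2$ then forces the final complement to vanish; for odd $p$ the last summand is the Steinberg $\repX^+_p=\repP^+_p$, already pinned down by its class.

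The main obstacle is the middle step above: exhibiting the non-simple projectives $\repP^+_r$ for $3\le r\le p-2$ as genuine submodules, equivalently proving that $\bCzd$ is projective. Unlike the Steinberg or $\repP^+_1$, these are not detected by their Grothendieck class alone, so one must produce cyclic generators and verify that the full $2p$-dimensional diamond (top $\repX^+_r$, two wings $\repX^-_{p-r}$, socle $\repX^+_r$) closes up. This is where the $q$-binomial combinatorics of $\Dz^m z^n=\sum_i(\dots)z^{n-i}\Dz^{m-i}$ enters, with the $\repP^+_1$ computation serving as the template to generalize.
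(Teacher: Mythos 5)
Your opening reduction is correct and attractive: multiplication $(\oC[z]/z^p)\tensor(\oC[\Dz]/\Dz^p)\to\bCzd$ is a $\U$-module map precisely by the module-algebra axiom, and it is bijective on the monomial basis, so $\bCzd\cong A\tensor B$ as $\U$-modules; the splitting mechanics (projective $=$ injective over the Frobenius algebra $\U$, plus the socle argument for directness) are also sound. One slip along the way: $A\not\cong B$. An isomorphism would preserve $K$-weights, but the weight-$\q^{-2}$ vector $z^{p-1}\in A$ satisfies $E\,z^{p-1}=0$, while the weight-$\q^{-2}$ vector $\Dz\in B$ has $E\,\Dz=1\neq0$; the two modules are the non-isomorphic ``halves'' of $\repP^+_1$ (socle reached by $F$ in one, by $E$ in the other). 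This does not affect your class computation. The genuine gap is the one you flag yourself, and it is not a refinement but the core of the theorem: nothing in your argument produces the modules $\repP^+_r$, $3\le r\le p-2$, inside $\bCzd$. Worse, this gap cannot be closed by Grothendieck classes plus weight parity even in principle, because $[\repP^+_r]=2[\repX^+_r]+2[\repX^-_{p-r}]=[\repP^-_{p-r}]$: the ``wrong'' projective $\repP^-_{p-r}$ (odd $r$) has exactly the same composition factors, hence the same even $K$-weights, so even knowing that $\bCzd$ is projective and knowing its class would not determine the $\pm$ labels. One must exhibit socle/top data, i.e., actual singular vectors. That is exactly where the paper's proof invests its effort: the explicit vectors $b_1(r)$, the recursion~\eqref{recursion} with the \emph{two} boundary conditions~\eqref{bdry-left} and~\eqref{bdry-right} solved by the $\lambda_i(r)$, and then the grading/counting argument of Fig.~\ref{fig:stack-of-P} combined with the classification of indecomposables from~\cite{[FGST2]}.

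If you want to complete the proof within your framework, the missing step can be supplied by using the fusion rules of~\cite{[FGST]} as isomorphisms of modules rather than identities of classes. The quotient of $A\tensor B$ by the submodule $\oC 1\tensor B+A\tensor\oC 1$ is $\repX^-_{p-1}\tensor\repX^-_{p-1}\cong\repX^+_{p-1}\tensor\repX^+_{p-1}$, whose decomposition per~\cite{[FGST]} is $\repX^+_1\oplus\repP^+_3\oplus\dots\oplus\repP^+_\nu$ (dimension check: $1+2p\cdot\frac{p-2}{2}=(p-1)^2$ for even $p$, and $1+p+2p\cdot\frac{p-3}{2}=(p-1)^2$ for odd $p$). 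A projective direct summand $P$ of a quotient $Q$ of $M$ lifts to a direct summand of $M$: the composite surjection $M\twoheadrightarrow Q\twoheadrightarrow P$ splits by projectivity of $P$. Hence $\repP^+_3\oplus\dots\oplus\repP^+_\nu$ splits off $\bCzd$ directly; the copy of $\repP^+_1$ generated by $t$, which you already have, splits off by injectivity; and Krull--Schmidt together with the count $2p+(p^2-2p)=p^2$ forces the remaining complement to vanish. This would give a proof genuinely different from (and shorter than) the paper's, at the price of importing the full module-level fusion computation from~\cite{[FGST]}, which the paper's self-contained recursion deliberately avoids.
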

\noindent
(We recall that $\dim\repP^+_n=2p$ for $1\leq n\leq p-1$ and
$\dim\repP^+_p=p$.)

\begin{proof}
  The proof is only half legerdemain and the other half calculation,
  somewhat involved at one point; reducing the calculational component
  would be desirable.

  The module $\repP^+_{1}$ is given in~\eqref{repPp1}.  The module
  $\repP^+_{p}$, which occurs in the direct sum in the theorem
  whenever $p=2s+1$ is odd, is the irreducible representation with the
  highest-weight vector
  \begin{equation*}
    t_1(s) = \sum_{i=0}^{s} \q^{i s} \qbin{s + i - 1}{i}
    z^{i + s} \Dz^{i},\quad p=2s+1.
  \end{equation*}
  Calculating with the aid of
  \begin{align*}
    E(z^m \Dz^n)&=\q^{1-n}\qint{n}z^m \Dz^{n-1}
    -\q^{m-2n}\qint{m} z^{m+1} \Dz^n,\\
    F(z^m \Dz^n)&=\q^{1-m}\qint{m} z^{m-1} \Dz^{n}
    -\q^{n-2m}\qint{n}z^m\Dz^{n+1},
  \end{align*}
  we easily verify that $E t_1(s)=0$; it also follows that $F^{p-1}
  t_1(s)\neq 0$; in fact,
  \begin{equation*}
    F^{p-1} t_1(s)=\qfac{p - 1}\sum_{i=0}^{s} \q^{i s}
    \qbin{s + i - 1}{i} z^{i} \Dz^{i + s}.
  \end{equation*}

  As we know from~\cite{[FGST]}, each of the other $\repP^+_{2r+1}$
  modules for $1\leq r\leq\floor{\frac{p-1}{2}}$ has the structure
  (with $r$ omitted from arguments for brevity)
  \begin{equation}\label{proj(2r+1)}
    \xymatrix@=12pt@C=6pt{%
      &&&t_1
      \ar_{E}[dl]
      &
      \kern-10pt\rightleftarrows \ldots \rightleftarrows\kern-10pt
      &t_{2r+1}\ar^F[dr]
      \\
      l_{p-2r-1}
      &
      \kern-10pt\rightleftarrows \ldots \rightleftarrows\kern-10pt
      &l_1\ar_F[dr]
      &&&&r_1\ar[dl]^E
      &
      \kern-10pt\rightleftarrows \ldots \rightleftarrows\kern-10pt
      &r_{p-2r-1}\\
      &&&b_1
      &
      \kern-10pt\rightleftarrows \ldots \rightleftarrows\kern-10pt
      &b_{2r+1}
    }
  \end{equation}
  and our task is now to identify the corresponding elements
  in~$\bCzd$.

  We begin constructing $\repP^+_{2r+1}$ from the bottom, setting
  \begin{equation*}
    b_1=\sum_{i=0}^{p - r - 1} \ffrac{\qfac{r + i - 1}}{\qfac{i}}\,
    \q^{r i} z^{i + r} \Dz^{i},
  \end{equation*}
  which is easily verified to satisfy the relation $E b_1=0$; also,
  $F^{2r} b_1\neq0$\,---\,in fact,
  \begin{equation*}
    F^{2r} b_1= \qfac{2r}\sum_{i=0}^{p - r - 1}
    \ffrac{\qfac{r + i - 1}}{\qfac{i}}\,
    \q^{r i}z^{i} \Dz^{i + r}
  \end{equation*}
  ---\,and $F^{2r+1} b_1 = 0$.  This completely describes the bottom
  $(2r+1)$-dimensional submodule (the irreducible representation
  $\repX^+_{2r+1}$).

  We next seek $l_1$ such that $b_1=F l_1$; obviously,
  $l_1$ is of the general form
  \begin{equation*}
    l_1=\sum_{i=0}^{p - r - 2} \lambda_i \q^{r i} z^{i + r + 1} \Dz^{i}.
  \end{equation*}
  The condition $b_1=F l_1$ is equivalent to the recursion relations
  (we restore $r$ in the argument)
  \begin{gather}\label{recursion}
    \lambda_{i + 1}(r)\qint{i + r + 2}
    - \q^{-2 r - 1}\qint{i}\lambda_{i}(r)
    = \q^{r + i + 1}\ffrac{\qfac{i + r}}{\qfac{i + 1}}.
  \end{gather}
  The problem is made nontrivial by the existence of \textit{two}
  boundary conditions: we must have
  \begin{align}
    \label{bdry-left}
    \lambda_{0}(r) &= \q^{r} \ffrac{\qfac{r - 1}}{\qint{r + 1}}
    \\[-.5\baselineskip]
    \intertext{and}
    \label{bdry-right}
    \lambda_{p - r - 2}(r) &= \q^{2 r}\ffrac{\qfac{r}}{\qint{r + 2}}
  \end{align}
  simultaneously.

  We now solve the recursion starting from the $i=0$ boundary.  The
  problem is thus to find $\lambda_i(r)$ with $i\geq 1$
  from~\eqref{recursion} and~\eqref{bdry-left} and then verify
  that~\eqref{bdry-right} is satisfied.

  The solution is particularly simple for $r=1$, where
  $\lambda_i(1)=\q^2/\qint{3}$ for all $i\geq 1$.  For $r=2$, the
  solution is ``linear in $i$'':
  \begin{gather*}
    \lambda_i(2)=
    {\qbin{5}{2}}^{-1}(\q^3 \qint{i + 4} + \q^4 \qint{i - 1}),
    \quad i\geq 1.
  \end{gather*}
  For $r=3$, it is ``quadratic'' in a similar sense,
  \begin{gather*}
    \lambda_{i}(3) = {\qbin{7}{3}}^{-1}
    \Bigl(
      \q^4\qint{i + 5}\qint{i + 6}
      + \q^5 \qint{i + 5}\qbin{3}{2}\qint{i - 1} + 
      \q^6 \qint{i - 2}\qint{i - 1}
    \Bigr),
    \quad i\geq 1.
  \end{gather*}
  The general solution is given by
  \begin{multline*}
    \lambda_i(r)=
    {\qbin{2 r + 1}{r}}^{-1}
    \biggl(
    \q^{r + 1}\qbin{i + 2r}{r - 1}\qfac{r - 1}+{}\\
    \qquad\qquad{}+
    \sum_{n=2}^{r - 1} \q^{r + n}\qbin{i + 2r + 1 - n}{r - n}
%%     \qbin{r}{n}\qbin{r - 1}{n - 1}
%%     \ffrac{\qfac{r - n}}{\qint{r + 1 - n}}
    \qbin{r - 1}{n}\qbin{r}{n - 1}\qfac{r - n - 1}
    \prod_{j=1}^{n - 1} \qint{i - j}+{}\\
    {}+
    \q^{2 r}\prod_{j=1}^{r - 1}\qint{i - j}
    \biggr),
  \end{multline*}
  $i\geq1$.  The first term in the brackets can be included into the
  sum over $n$, by extending it to $n=1$, but we isolated it because
  this is the only term that does not contain the factor $\qint{i-1}$
  and it clearly shows that the solution starts as ${\qbin{2 r +
      1}{r}}^{-1}\q^{r+1}\qint{i+r+2}\dots\qint{i+2r}$ (all the other
  terms are then found relatively easily from the recursion).  The
  boundary condition at $i=p-r-2$ is remarkably simple to verify: only
  one (the~last) term contributes and immediately yields the desired
  result.

  The structure of the general formula may be clarified with a more
  representative example:
  \begin{small}
  \begin{multline*}
    \lambda_i(5)=
    {\qbin{11}{5}}^{-1}
    \biggl(\q^6 \qint{i + 10}\qint{i + 9}\qint{i + 8}\qint{i + 7} +
    \q^7 \qint{i + 9}\qint{i + 8}\qint{i + 7}\qbin{5}{2}\qint{i - 1}\\
    {}+
    \q^8 \qint{i + 8}\qint{i + 7}\ffrac{\qint{4}}{\qint{2}}
    \qbin{5}{2}\qint{i - 2}\qint{i - 1}
    + \q^9 \qint{i + 7}\qbin{5}{2} \qint{i - 3}\qint{i - 2}\qint{i - 1}
    \\
    {}+ \q^{10}\qint{i - 4}\qint{i - 3}\qint{i - 2}\qint{i - 1}
    \biggr).
  \end{multline*}%
  \end{small}%
  This also illustrates the general situation with the boundary
  condition at $i=p-r-2$ (only the last term is nonzero in
  $\lambda_{p-7}(5)$).

  With the $\lambda_i$ and $l_1$ thus found, the other $l_n$ follow by
  the action of~$E$.  

  All the $r_n$ in~\eqref{proj(2r+1)}, starting with $r_1$ such that
  $E r_1= b_{2r+1}$, are found totally similarly (or, with some care,
  obtained from the $l_n$ by interchanging $z$ and $\Dz$).

  The proof is finished with a recourse to the representation theory
  of~$\U$~\cite{[FGST2]}.  For definiteness, we consider the case of
  an odd $p$, $p=2s+1$.  Then what we have established so far is the
  existence of elements shown with black dots in
  Fig.~\ref{fig:stack-of-P},
  \begin{figure}[tbh]
    \centering
    \begin{equation*}
      \xymatrix@R=7pt@C=1pt{
        &{\scriptstyle s+3}\ar@{.}[];[10,0]
        &{\scriptstyle s+2}\ar@{.}[];[10,0]
        &{\scriptstyle s+1}\ar@{.}[];[10,0]
        &{\scriptstyle\;\ s\;\ }\ar@{.}[];[10,0]
        &{\scriptstyle s-1}\ar@{.}[];[10,0]
        &{\scriptstyle s-2}\ar@{.}[];[10,0]
        &{\scriptstyle s-3}\ar@{.}[];[10,0]
        &{\scriptstyle s-4}\ar@{.}[];[10,0]
        &\ldots\vphantom{s}%%%\ar@{.}[];[10,0]
        \\
        \repP^+_p:& & & &*{\bullet}\ar@{-}@*{[|(3)]}[r]
        &*{\bullet}\ar@{-}@*{[|(3)]}[r]&*{\bullet}\ar@{-}@*{[|(3)]}[r]
        &*{\bullet}\ar@{-}@*{[|(3)]}[r]&*{\bullet}\ar@{-}@*{[|(3)]}[r]&\ldots\\
        & & & & & & & & & \\
        & & & & &*{\circ}\ar[dl]&*{\ast}&*{\ast}&*{\ast}&\dots\\
        \repP^+_{p-2}:& & &*{\bullet}\ar@{-}@*{[|(3)]}[r]
        &*{\bullet}\ar@{-}@*{[|(3)]}[dr]\ar[dr]& & & & & \\
        & & & & &*{\bullet}\ar@{-}@*{[|(3)]}[r]
        &*{\bullet}\ar@{-}@*{[|(3)]}[r]
        &*{\bullet}\ar@{-}@*{[|(3)]}[r]
        &*{\bullet}\ar@{-}@*{[|(3)]}[r]
        &\ldots\\
        & & & & & & & & & \\
        & & & & & &*{\circ}\ar[dl]& & & \\
        \repP^+_{p-4}:& &*{\bullet}\ar@{-}@*{[|(3)]}[r]
        &*{\bullet}\ar@{-}@*{[|(3)]}[r]&*{\bullet}\ar@{-}@*{[|(3)]}[r]
        &*{\bullet}\ar@{-}@*{[|(3)]}[dr]\ar[dr]& & & & \\
        & & & & & &*{\bullet}\ar@{-}@*{[|(3)]}[r]
        &*{\bullet}\ar@{-}@*{[|(3)]}[r]
        &*{\bullet}\ar@{-}@*{[|(3)]}[r]&\ldots
        \\
        & & & & & & & & &
      }
    \end{equation*}
    \caption{Identifying the projective modules in $\bCzd$.}
    \label{fig:stack-of-P}
  \end{figure}
  for the irreducible projective module $\repP^+_p$ and for what is to
  become the projective modules $\repP^+_{p-2}$, $\repP^+_{p-4}$,
  \dots, $\repP^+_{1}$.  To actually show that the black dots do
  complete to the respective projective modules, we establish the
  arrows (maps by~$E$) from some elements shown with open dots (which
  are thus to become the corresponding $t_1$ in~\eqref{proj(2r+1)}).
  The grading indicated in the figure is such that $\deg z=1$ and
  $\deg\Dz=-1$.  In any grade $u>0$, there are $p-u$ linearly
  independent elements in $\bCzd$:
  \begin{equation*}
    z^u,\quad z^{u+1}\Dz,\quad z^{u+2}\Dz^2,
    \quad \dots,\quad z^{p-1}\Dz^{p-1-u}.
  \end{equation*}
  In grade $s$, in particular, there are $p-s=s+1$ elements, and just
  $s+1$ black dots in all of the $\repP^+_{p}$, $\repP^+_{p-2}$,
  \dots, $\repP^+_{1}$.  But in grade $s-1$, there are $s+2$ linearly
  independent elements, only $s+1$ of which have been accounted for by
  the black dots constructed so far.  We let the remaining
  element\,---\,the open dot in grade $s-1$ in
  Fig.~\ref{fig:stack-of-P}\,---\,be temporarily denoted by
  $\circ_{s-1}$.

  Because grade $s$ is exhausted by black dots, $E(\circ_{s-1})$ is
  either zero or a linear combination of the $\bullet$s.  But it is
  elementary to see that there is only one (up to a nonzero factor, of
  course) element in each grade annihilated by~$E$, and in grade $s-1$
  it has already been found: this is the $b_1$ state (the leftmost
  $\bullet$) in~$\repP^+_{p-2}$ (once again, in what is to become
  $\repP^+_{p-2}$ when we finish the proof).  Therefore,
  $E(\circ_{s-1})$ is a linear combination of the $\bullet$s in
  grade~$s$, but we know from~\cite{[FGST2]} that this can only be the
  corresponding element of the~$\repP^+_{p-2}$ module (the reason is
  that this is the only element in this grade that is annihilated
  by~$F$ in a quotient of~$\bCzd$).

  Once the
  \raisebox{8pt}{$\xymatrix@R=6pt@C=4pt{&*{\circ}\ar[dl]\\
      *{\bullet}&}$} arrow from a \textit{single} element in
  grade~$s-1$ is thus established, the rest of the $\repP^+_{p-2}$
  module is completed automatically~\cite{[FGST2]}.  In particular,
  there are the $\ast$s shown in Fig.~\ref{fig:stack-of-P}, and hence
  just one missing $\bCzd$ element in grade~$s-2$, to which we again
  apply the above argument.  Repeating this gives all of the
  projective modules in~\eqref{the-decomposition}.
\end{proof}

\section{Matrix representation}\label{sec:matrices}
\subsection{}
The matrix representation of the basic commutation
relation~\eqref{the-relation} is found quite straightforwardly (it has
many parallels in the $q$-literature, but nevertheless seems to be
new).  Because both $z$ and $\Dz$ are $p$-nilpotent, the matrices
representing them have to be triangular and start with a
next-to-leading diagonal; Eq.~\eqref{the-relation} then fixes the
matrices as in~\eqref{zDz-matrices} (modulo similarity
transformations).  The rest is just a matter of direct verification
(and, of course, a consequence of the fact that $\dim\bCzd=p^2$).

As regards the $\U$ action in the explicit form~\eqref{diagrams}, we
first verify it on the generators, $\Dz$ and $z$ represented as
in~\eqref{zDz-matrices}, and then propagate to $\Mat_p(\oC)$ in
accordance with the $\U$-module algebra property.

It is amusing to see how the $\U$-module algebra property $h(XY)=\sum
h'(X)h''(Y)$ holds for the ordinary matrix multiplication.  For $h=F$,
for example, we have (%%omitting the Sweedler sum and choosing
for ``bulk'' values of $i$ and $j$)
\begin{multline*}
  \Bigl(\sum F'(X)F''(Y)\Bigr)_{ij}=
  \sum_{k=1}^p\left(K^{-1}(X)\right)_{ik}\left(F(Y)\right)_{kj}
  + \sum_{k=1}^p\left(F(X)\right)_{ik}\left(Y\right)_{kj}\\
%%   {}=
%%   \sum_{k=1}^{p} \q^{2(k-i)}x_{ik} 
%%   \bigl(\q^{1 - k}\qint{k}y_{k+1,j} -
%%     \q^{j - 2k}\qint{j - 1}y_{k,j-1}\bigr)\\*
%%   {}+
%%   \q^{1-i}\qint{i}x_{i + 1, 1} + \sum_{k=2}^p\bigl(\q^{1-i}\qint{i}x_{i+1,k}
%%   - \q^{k - 2i}\qint{k - 1}x_{i,k-1}\bigr)y_{kj}=\\
  {}=
  \sum_{k=1}^{p-1}
  \q^{k-2i + 1}x_{ik}\qint{k}y_{k+1,j} -
  \sum_{k=1}^{p}\q^{j - 2i}\qint{j - 1}x_{ik} y_{k,j-1}\\
  \shoveright{{}+
    \q^{1-i}\qint{i}x_{i + 1, 1}y_{1,j}
    + \sum_{k=1}^{p-1}\bigl(\q^{1-i}\qint{i}x_{i+1,k+1}
    - \q^{k - 2i + 1}\qint{k}x_{i,k}\bigr)y_{k+1,j}}\\
  {}=
  -
  \sum_{k=1}^{p}\q^{j - 2i}\qint{j - 1}x_{ik} y_{k,j-1}
  + \sum_{k=0}^{p-1}\q^{1-i}\qint{i}x_{i+1,k+1}y_{k+1,j},
\end{multline*}
which is $\left(F(XY)\right)_{ij}$.  The formulas for $E(XY)$ are
equally straightforward.

\subsection{Examples}
\subsubsection{}
As another example of ``matrices as a visual aid,'' we note that the
cointegral $\coint\in\U$ must map any $X\in\Mat_p(\oC)$ into the unit
matrix times a factor; with the cointegral normalized as
in~\cite{[FGST]},
\begin{equation*}
  \coint=\sqrt{\ffrac{p}{2}}\,\ffrac{1}{([p-1]!)^2}\,
  F^{p-1}E^{p-1}\sum_{j=0}^{2p-1}K^j,
\end{equation*}
we actually have
\begin{equation*}
  \coint(X)=\pmb{1}
  \Bigl((-1)^p\sqrt{2p}\sum_{i=1}^{p} \q^{2i - 1} x_{ii}\Bigr).
\end{equation*}

Also, it is easy to see that in the matrix form, the $b_1$ (bottom
left) element of each $\repP^+_{2r+1}$ ($r\geq1$) is the one-diagonal
lower-diagonal matrix
\begin{equation*}
  (b_1(r))_{ij}=\delta_{i, j + r}\,\q^{2 r (j - 1)}\qfac{r - 1}.
\end{equation*}

\subsubsection{}
We choose the ``moderately large'' value $p=4$ for further
illustration.  Then the idea of how the $\U$ generators act on the
matrices is clearly seen from
\begin{footnotesize}
\begin{multline*}
  \nsize{(\q - \q^{-1})E X={}}
  \begin{pmatrix}
    x_{12} & x_{13} & x_{14} & 0 \\
    -x_{11} + x_{22} & \q^2 x_{12} +x_{23} & x_{13}+x_{24} & -\q^2
    x_{14} \\
    -\q^2 x_{21} + x_{32} & -x_{22} + x_{33} & \q^2 x_{23} +x_{34} &
    x_{24} \\
    x_{31}+x_{42} & -\q^2 x_{32} + x_{43} & -x_{33} + x_{44} & \q^2 x_{34}
  \end{pmatrix},
  \\
  \shoveleft{\nsize{(\q - \q^{-1})^2E^2 X={}}}\\*
  \begin{pmatrix}
    x_{13} & x_{14} & 0 & 0 \\
    (\q^2-1) x_{12}+x_{23} & (\q^2+1) x_{13}+x_{24} & (1-\q^2) x_{14} & 0 \\
    \q^2 x_{11} -(\q^2+1) x_{22}+x_{33}
    & -\q^2 x_{12} +(\q^2-1) x_{23}+x_{34}
    & \q^2 x_{13} +(\q^2+1) x_{24} & -\q^2 x_{14} \\
    -\q^2 x_{21} +(1-\q^2) x_{32}+x_{43} & \q^2 x_{22}
    -(\q^2+1) x_{33}+x_{44} & (\q^2-1) x_{34}-\q^2 x_{23} & \q^2 x_{24}
  \end{pmatrix},
  \\
  \nsize{(\q - \q^{-1})^3E^3 X={}}
  \begin{pmatrix}
    x_{14} & 0 & 0 & 0 \\
    \q^2 x_{13} +x_{24} & x_{14} & 0 & 0 \\
    x_{12}-x_{23}+x_{34} & \q^2 x_{24}-x_{13} & x_{14} & 0 \\
    \q^2 x_{11} - \q^2 x_{33} -x_{22}+x_{44} &
    -x_{12}+x_{23}-x_{34} & -\q^2 x_{13} -x_{24} & x_{14}
  \end{pmatrix},
\end{multline*}
{\normalsize and}
\begin{gather*}  
  \nsize{FX={}}
  \begin{pmatrix}
    x_{21} & x_{22}-x_{11} & (-\q^2-1)
    x_{12}+x_{23} & x_{24}-\q^2 x_{13} \\
    (1-\q^2) x_{31} & \q^2 x_{21} +(1-\q^2) x_{32} & (\q^2-1) x_{22}+(1-\q^2)
    x_{33} & (1-\q^2) x_{34}-x_{23} \\
    -\q^2 x_{41} & x_{31}-\q^2 x_{42} & (\q^2+1)
    x_{32}-\q^2 x_{43} & \q^2 x_{33}-\q^2 x_{44} \\
    0 & -\q^2 x_{41} & (1-\q^2) x_{42} & x_{43}
  \end{pmatrix}.
\end{gather*}
\end{footnotesize}

\section{Differential calculus on $\OCzd$}\label{sec:calculus}
We construct a quantum de~Rham complex $(\OCzd,\,d)$ of $\bCzd$ where
the differential $d$ commutes with the $\U$ action.  This requires
introducing a somewhat unusual (compared to the quantum plane
case~\cite{[WZ],[M]}) action of $\U$ on the differentials $d
z\equiv\dz$ and $d\Dz\equiv\dDz$.

\subsection{}
Let $\oC_{\q}[\dz,\dDz]$ be the unital algebra with the relations
\begin{equation}\label{super-plane}
  \begin{gathered}
    \dz^2=0,\quad \dDz^2=0,\\
    \dDz\,\dz=-\q^{-2}\dz\,\dDz.
  \end{gathered}
\end{equation}
On $\Czd\tensor\oC_{\q}[\dz,\dDz]$, we define the differential as
\begin{equation}\label{the-differential}
  d(z)=\dz,\quad d(\Dz)=\dDz,\quad d(\dz)=0,\quad d(\dDz)=0
\end{equation}
(and $d(1)=0$) and set
\begin{equation}\label{all-relations}
  \begin{alignedat}{2}
    \dz\,z&=\q^{-2}z\,\dz,&\qquad \dDz\,\Dz&=\q^{2}\Dz\,\dDz,\\
    \dz\,\Dz&=\q^{2}\Dz\,\dz,&\qquad \dDz\,z&=\q^{-2}z\,\dDz.
  \end{alignedat}
\end{equation}
The first line here immediately implies that
\begin{gather*}
  d(z^m)=\q^{1-m}\qint{m}z^{m-1}\dz,\qquad
  d(\Dz^n)=\q^{n-1}\qint{n}\Dz^{n-1}\dDz.
\end{gather*}
\begin{lemma}
  The algebra on $z$, $\Dz$, $\dz$, and $\dDz$ with relations
  \eqref{the-relation}, \eqref{super-plane}, and~\eqref{all-relations}
  and differential~\eqref{the-differential} is an associative
  differential algebra.
\end{lemma}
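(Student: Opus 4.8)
The plan is to verify two things: (1) that the relations are mutually consistent so that the algebra really has the expected size (a PBW-type basis), and (2) that $d$ is a well-defined derivation, i.e.\ that it is compatible with every defining relation and satisfies $d^2=0$. Associativity will follow once we know we have a consistent set of relations, since the algebra is presented as a quotient of a free (tensor) algebra by a two-sided ideal, and a quotient of an associative algebra is automatically associative; the real content is therefore that the ideal is ``small enough'' that the monomials $z^m\Dz^n\dz^{\epsilon}\dDz^{\phi}$ with $\epsilon,\phi\in\{0,1\}$ form a basis.

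First I would check the consistency of the relations by the Diamond Lemma. Ordering the generators, say $\dz<\dDz<z<\Dz$ (or whichever order makes the right-hand sides of \eqref{the-relation}, \eqref{super-plane}, \eqref{all-relations} ``reduced''), each relation rewrites a product of two generators as a linear combination of products in which the generators appear in nondecreasing order. The overlap ambiguities to resolve are the degree-three monomials in which two rewriting rules can both be applied first, for instance $\dDz\,\dz\,z$, $\dDz\,\dz\,\Dz$, $\dz\,\Dz\,z$, $\dDz\,z\,\Dz$, and the like. For each such triple I would reduce it in the two possible ways and check that the normal forms agree; because every relation here is a simple $\q$-commutation with a scalar factor (apart from the additive $\q-\q^{-1}$ in \eqref{the-relation}), these confluence checks reduce to multiplying out powers of $\q$ and seeing that they match. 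This establishes that the algebra has the PBW basis above and, in particular, that it is associative.

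Next I would verify that $d$ is well defined. Since $d$ is declared on generators by \eqref{the-differential} and extended by the graded Leibniz rule, the only thing to check is that $d$ annihilates (or rather, maps to the ideal) each defining relation. Concretely, one applies $d$ to each of \eqref{the-relation}, \eqref{super-plane}, \eqref{all-relations} and verifies the result vanishes modulo the relations. For example, applying $d$ to \eqref{the-relation} gives $d(\Dz)\,z+\Dz\,d(z)-\q^{-2}\bigl(d(z)\,\Dz+z\,d(\Dz)\bigr)=\dDz\,z+\Dz\,\dz-\q^{-2}(\dz\,\Dz+z\,\dDz)$, and substituting the cross-relations from \eqref{all-relations} should collapse this to zero; the other relations are checked the same way. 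One also checks $d^2=0$, which is immediate on generators since $d(\dz)=d(\dDz)=0$ and $d$ is a degree-one graded derivation, so $d^2$ is an even derivation vanishing on all generators.

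The main obstacle is bookkeeping rather than conceptual: getting the sign and $\q$-power conventions in the graded Leibniz rule consistent with the mixed relations \eqref{all-relations}, so that the compatibility of $d$ with \eqref{the-relation} really does cancel. The cross-relations in \eqref{all-relations} were evidently chosen precisely so that this cancellation occurs, so I expect the check to go through, but it is exactly here that a wrong power of $\q$ would break the argument, and this is the step I would carry out most carefully. The Diamond-Lemma confluence for the degree-three overlaps involving \eqref{the-relation} (whose inhomogeneous term can interact with the scalar relations) is the second place where an error could hide, so I would treat those overlaps explicitly rather than by appeal to the purely quadratic cases.
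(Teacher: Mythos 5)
Your proposal is correct and is essentially the paper's proof: the paper disposes of this lemma with the single line ``The proof is by direct verification,'' and your plan---Diamond-Lemma confluence checks on the degree-three overlaps to get the PBW basis $z^m\Dz^n\dz^{\epsilon}\dDz^{\phi}$, plus checking that $d$ (extended by the graded Leibniz rule) respects each defining relation and squares to zero---is exactly what that verification consists of. The paper's accompanying footnote, that associativity forces the coefficients $\nu$ and $\beta$ in the tentative mixed relations $\dz\,\Dz=\mu\,\Dz\,\dz+\nu\,z\,\dDz$ and $\dDz\,z=\alpha\,z\,\dDz+\beta\,\Dz\,\dz$ to vanish, is precisely the kind of fact uncovered by resolving the overlaps involving the inhomogeneous relation \eqref{the-relation}, which you rightly single out as the place where an error could hide.
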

\noindent
The proof is by direct verification.\footnote{As regards comparison
  with the more familiar case of the Wess--Zumino differential
  calculus on the quantum plane~\cite{[WZ],[M]}, it may be interesting
  to note that the associativity requires the vanishing of
  \textit{both} coefficients $\nu$ and $\beta$ in the tentative
  relations $\dz\,\Dz = \mu\,\Dz\,\dz + \nu\,z\,\dDz$ and $\dDz\,z =
  \alpha\,z\,\dDz + \beta\,\Dz\,\dz$.  However, similarities with the
  quantum plane, genuine of superficial, come to an end when we
  consider the quantum group action: the formulas in~\bref{qgaction}
  bear little resemblance to the quantum plane case.}

\subsection{}\label{qgaction}We next define a $\U$ action on the above
algebra by setting
\begin{alignat*}{3}
  E\dz &= -\qint{2}z\,\dz,&\qquad
  K\dz &= \q^2 \dz,&\qquad
  F\dz &= 0,\\
  E\dDz &= 0,&
  K\dDz &= \q^{-2} \dDz,&
  F\dDz &= -\q^{2}\qint{2}\Dz\,\dDz.
\end{alignat*}

\begin{lemma}\label{diff-U-module}
  This defines a differential $\;\U$-module algebra.
\end{lemma}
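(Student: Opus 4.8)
The plan is to establish separately the three properties packaged into the statement: that the $\U$-action of \bref{qgaction} is \emph{well defined} on $\OCzd$ (compatible with every defining relation), that the resulting module is a $\U$-module algebra, and that the differential $d$ of \eqref{the-differential} commutes with the $\U$-action. All three reduce to checks on the four generators $z$, $\Dz$, $\dz$, $\dDz$ and then propagate by formal arguments using the coproduct.

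For well-definedness I would realize $\OCzd$ as a quotient $T/I$, where $T$ is the free algebra on $z,\Dz,\dz,\dDz$ and $I$ is the two-sided ideal generated by the relation elements of \eqref{the-relation}, \eqref{super-plane}, and \eqref{all-relations}, together with $z^p$ and $\Dz^p$. Since $T$ is the tensor algebra on the $\U$-module spanned by the generators, it carries a canonical $\U$-module algebra structure extending the prescribed action, so it suffices to show that $I$ is a $\U$-submodule. Because $I$ is an ideal and $T$ is a module algebra, the iterated coproduct gives, for a relation element $r$, $E(arb)=(Ea)(Kr)(Kb)+a(Er)(Kb)+ar(Eb)$ and similarly for $F$, so ideal absorption reduces $\U$-stability of $I$ to the statement that $E$, $K$, and $F$ each send every \emph{generating} relation element into $I$. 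The action of $K$ preserves each relation on the nose; for $E$ and $F$ the check is a short reduction modulo $I$. For instance, for the ``super-plane'' relation $r=\dDz\,\dz+\q^{-2}\dz\,\dDz$ one finds $E(r)=-\q^{-2}\qint{2}\,z\,r\in I$, while for the cross relation $r'=\dz\,\Dz-\q^2\Dz\,\dz$ the expression $F(r')$ reduces to $0$ modulo $I$ after substituting $\dz\,\Dz=\q^2\Dz\,\dz$. The relation \eqref{the-relation} was already shown $\U$-stable in the lemma that $\Czd$ is a $\U$-module algebra, and $E$, $K$, $F$ annihilate $z^p$ and $\Dz^p$ as recorded earlier (while $d(z^p)=\q^{1-p}\qint{p}\,z^{p-1}\dz=0$ because $\qint{p}=0$).

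For the module-algebra property itself I would invoke the standard propagation lemma: if $h(vw)=\sum h'(v)\,h''(w)$ holds whenever $v$ is a generator and $w$ is arbitrary, then it holds for all $v,w$, the inductive step relying only on coassociativity and ideal structure. Together with the well-definedness above this leaves only the checks on generators, which come from the earlier lemmas in the $z,\Dz$ sector and from the definitions of \bref{qgaction} in the $\dz,\dDz$ sector. That $d$ is a $\U$-module map reduces in the same way to generators: one verifies $d(Ez)=E\,d(z)$, $d(F\Dz)=F\,d(\Dz)$, and the analogues, using $d(z^m)=\q^{1-m}\qint{m}z^{m-1}\dz$ and $d(\Dz^n)=\q^{n-1}\qint{n}\Dz^{n-1}\dDz$ --- for example $E\,d(z)=E\dz=-\qint{2}z\dz$ matches $d(Ez)=-\q\,d(z^2)=-\q\cdot\q^{-1}\qint{2}z\dz$ --- and on $\dz,\dDz$ both sides vanish by $\dz^2=\dDz^2=0$. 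The commutation then propagates to all of $\OCzd$ because $d$ is a graded derivation of form-degree $+1$, the $\U$-action obeys the Leibniz-type module-algebra rules, and (crucially) the $\U$-action preserves the form-degree, so the graded signs in $d(\omega\eta)=d(\omega)\,\eta+(-1)^{|\omega|}\omega\,d(\eta)$ agree on the two sides of $d(h\cdot\omega\eta)=h\cdot d(\omega\eta)$.

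I expect the main obstacle to be the well-definedness step, specifically the $\U$-stability of the mixed relations \eqref{all-relations} and of the anticommutation relation in \eqref{super-plane}. These are where the unusual factors $-\q^{-2}$ and $\q^{\pm2}$ enter, and a single mistaken coefficient would break the precise cancellation that returns $E(r)$ and $F(r)$ to the ideal $I$; everything downstream (both propagation arguments) is then formal.
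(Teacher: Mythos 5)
Your proposal is correct and follows essentially the paper's own route: the paper's proof consists precisely of verifying that the $\U$-action of \bref{qgaction} preserves the two-sided ideal generated by \eqref{super-plane}--\eqref{all-relations}, with the module-algebra axiom and the $d$-equivariance then propagating formally from the generators via the coproduct, exactly as you spell out (and your sample reductions, e.g.\ $E(\dDz\,\dz+\q^{-2}\dz\,\dDz)$ and $F(\dz\,\Dz-\q^{2}\Dz\,\dz)$ vanishing modulo the ideal, check out). The only harmless organizational difference is that you absorb $z^p$ and $\Dz^p$ into the ideal from the start, whereas the paper imposes that quotient in a separate step after the lemma, using $d(z^p)=d(\Dz^p)=0$.
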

\noindent
The proof amounts to verifying that this action preserves the
two-sided ideal generated
by~\eqref{super-plane}--\eqref{all-relations}.

\subsubsection{}\label{action-formulas}We note simple consequences of
the above formulas:
\begin{align*}
  E^{i}(z^{m}\,\dz) &= (-1)^i \q^{i m + \frac{i(i-1)}{2}}
  \qbin{m + i + 1}{i}\qfac{i}\, z^{m + i}\,\dz,\\
  F^{i}(z^{m}\,\dz) &= \q^{i(1 - m) + \frac{i(i-1)}{2}}
  \qbin{m}{m - i}\qfac{i}\,z^{m - i}\,\dz,\\
  E^{i}(\Dz^{m}\,\dDz) &= \q^{-i(m + 1) + \frac{i(i-1)}{2}}
  \qbin{m}{m - i}\qfac{i}\,\Dz^{m - i}\,\dDz,\\
  F^{i}(\Dz^{m}\,\dDz) &= (-1)^i \q^{i(m + 2) + \frac{i(i-1)}{2}}
  \qbin{m + i + 1}{i}\qfac{i}\,\Dz^{m + i}\,\dDz.
\end{align*}
In particular,
\begin{align*}
  E(z^{m}\,\dz) &= - \q^m \qint{m + 2}\,z^{m + 1}\,\dz,\\
  F(\Dz^{m}\,\dDz) &= - \q^{m + 2} \qint{m + 2}\,\Dz^{m + 1}\,\dDz.
\end{align*}

\subsection{}Because $d(z^p)=0$ and $d(\Dz^p)=0$, it follows that the
differential $\U$-module algebra structure descends to
%% $\bCzd\tensor\oC_{\q}[\dz,\dDz]$.  
the quotient by the relations $z^p=0$ and $\Dz^p=0$.  We finally let
$\OCzd$ denote the resulting differential $\U$-module
algebra\,---\,the sought quantum de~Rham complex
\begin{equation*}
  \OCzd=(\bCzd\tensor\oC_{\q}[\dz,\dDz],d)\!\bigm/\!\mathcal{I}\!,
\end{equation*}
where $\mathcal{I}$ is the ideal generated by~\eqref{the-relation},
\eqref{^p-relations},
and~\eqref{super-plane}--\eqref{all-relations}.

As a vector space, $\OCzd$ naturally decomposes into zero-, one- and
two-forms.  In $\Omega^1\bCzd$, the elements $z^{p-1}\,\dz$ and
$\Dz^{p-1}\,\dDz$ are the cohomology of $d$ (the ``cohomology
corners'' of the modules shown in~\eqref{two-modules}).

\section{Conclusions}
As noted above, it is a classic result that (using the modern
nomenclature) the matrix algebra is generated by the generators $x$
and $y$ of a finite quantum plane (with $x^p=y^p=1$) at the
corresponding root of unity~\cite{[W]}; it may be even better known
that the quantum plane carries a quantum-$\SL2$
action~\cite{[WZ],[M]}; and the two facts can of course be combined to
produce a quantum-$\SL2$ action on matrices (cf.~\cite{[DNS],[CS]}).
We construct an action of $\UresSL2$ at $\q=e^{\frac{i\pi}{p}}$ on
$p\times p$ matrices starting not from the quantum plane but from
$q$-differential operators on a ``quantum line''; the explicit
formulas for this action are not altogether unworthy of consideration.

Also, the $\UresSL2$-module algebra constructed here (and most
``invariantly'' described in terms of $q$-differential operators) is
relevant in view of the Kazhdan--Lusztig correspondence between
$\UresSL2$ and the $(p,1)$ logarithmic conformal models.  Previously,
the Kazhdan--Lusztig correspondence in logarithmic conformal field
theories has been observed to hold at the level of representation
theories (of the quantum group and of the chiral algebra) and modular
transformations (on the quantum group center and on generalized
characters of the chiral
algebra)~\cite{[FGST],[FGST2],[FGST3],[FGST-q],[S-q]}.  Our results
show how it can be extended to the level of fields, the key
observation being that the object required on the quantum-group side
is an algebra with ``good'' properties under the action of $\U$ and
with a differential that commutes with this action.

Another possibility to look at the Kazhdan--Lusztig correspondence is
offered just by the $\UresSL2$-module algebra defined on
$\Mat_p(\oC)$: a ``spin chain'' can be defined by placing the algebra
generated by $z$ and $\Dz$ at each site (as we remember, these
generalize free fermions, which indeed occur at $p=2$).  In choosing
the Hamiltonian, an obvious option is to have it related to the
Virasoro generator $L_0$; a suggestive starting point on a finite
lattice is the relation~\cite{[FGST2]}
\begin{equation*}
  e^{2i\pi L_0}=\ribbon,
\end{equation*}
where $\ribbon$ is the ribbon element in $\UresSL2$.  In the matrix
language, the spin chain with the $\UresSL2$-module algebra generated
by $z$ and $\Dz$ at each site is equivalently described just by
letting $\UresSL2$ act on $\Mat_p(\oC)\tensor\Mat_p(\oC)\tensor\dots$,
which may be helpful in practical computations.  (This construction
may have some additional interest because the relevant action is
nonsemisimple (cf.~\cite{[HR],[NRG],[PRZ],[RS]}), but at the same time
the indecomposable representations occurring here are under control
due to the decomposition in~\eqref{the-decomposition}.)  In addition,
it is also interesting to answer several questions ``on the $\bCzd$
side,'' such as where the even-dimensional modules $\repX^+_{2r}$ and
their projective covers $\repP^+_{2r}$ are hiding.

\subsubsection*{Acknowledgments}This paper was supported in part by
the RFBR grant 07-01-00523 and the grant LSS-1615.2008.2. \ I thank
A.~Gainutdinov for the useful comments and P.~Pyatov for remarks on
the literature.

\appendix
\section{OPE algebras and parafermionic statistics}
\label{sec:OPE}
We outline how the parafermionic statistics can be incorporated into
conformal field theory.

\subsection{Background: OPE}
For conformal fields (operators) $A(w)$, $B(w)$, \dots\ defined on the
complex plane, the purpose of the OPE
algebra~\cite{[BBSS],[Th]}\footnote{We proceed in rather down-to-earth
  terms; see~\cite{[Ros]} and the references therein for a much more
  elaborate approach.} is to calculate the expressions (referred to as
OPE poles) $[A,B]_n$ in ``short-distance expansions''
\begin{equation}
  A(z)\,B(w) = \sum_{n\ll\infty}{\frac{[A, B]_n(w)}{(z-w)^n}}
\end{equation}
for any composite operators $A$ and $B$ in terms of the $[~,~]_m$
specified for a set of ``basis'' operators.  (By a composite operator
of any $A(w)$ and $B(w)$, we mean $[A,B]_0(w)$, which is also called
the normal-ordered product and is often written as $AB(w)$ or
$A(w)B(w)$.)  The rules for calculating the OPEs
are~\cite{[BBSS],[Th]}
\begin{gather*}
  [B,A]_n=(-1)^{A B}\sum_{\ell\geq n}\ffrac{(-1)^\ell}{(\ell-n)!}
  \diff^{\ell-n}[A,B]_\ell,\\
  [A, [B, C]_0]_n=(-1)^{A B}[B, [A,C]_n]_0
  + \sum_{\ell=0}^{n-1}\nbin{n-1}{\ell}[[A,B]_{n-\ell},C]_\ell,
\end{gather*}
where in the sign factor $(-1)^{A B}$\,---\,the signature of the Fermi
statistics\,---\,$A$ and $B$ denote the Grassmann parities of the
corresponding operators.\footnote{And $\diff$ is the operator of
  differentiation with respect to the coordinate on the complex plane;
  we use this notation instead of the more common $\partial$ so as not
  to add to the notation overload already existing with~``$z$,'' which
  is now a coordinate on the complex plane along with $w$.}

The first of the above rules allows computing the ``transposed'' OPE
$B(z)\,A(w)$ once the OPE $A(z)B(w)$ is known; the second rule is the
prescription for calculating an OPE with a composite operator
$[B,C]_0$.  There is a third rule stating that $\diff$ acts on the
normal-ordered product $[A,B]_0$ as derivation.  These three rules
(and the simple relation $[\diff A,B]_{n}=-(n-1)[A,B]_{n-1}$) suffice
for the calculation of any OPE of composite operators~\cite{[Th]}.

Each of the two formulas above inevitably contains an inversion of the
operator order (accompanied by a sign factor for fermions); this is
where a generalization to the parafermionic statistics is to be made.

\subsection{Parafermionic OPE}
\label{q-OPE}
We assume that the fields carry a quantum group action and that an
$R$-matrix is given.  As a generalized ``transposition'' OPE rule, we
then postulate
\begin{gather}\label{BA-OPE}
  [B,A]_k=\sum_{\ell\geq k}\ffrac{(-1)^\ell}{(\ell-k)!}
  \diff^{\ell-k}[R^{(2)}(A), R^{(1)}(B)]_\ell,
\end{gather}
where $R^{(2)}$ and $R^{(1)}$ are understood just as
in~\eqref{equivariance} (Sweedler's summation is implied), and where
we assume that all the OPEs in the right-hand side are known.  For the
``composite'' OPE rule, similarly, we set
\begin{gather}\label{ABC-OPE}
  [A, [B, C]_0]_k=[R^{(2)}(B), [R^{(1)}(A),C]_k]_0
  + \sum_{\ell=0}^{k-1}\nbin{k-1}{\ell}[[A,B]_{k-\ell},C]_\ell.
\end{gather}

The consistency of these formulas is not obvious a priori, already
because of the new fields, except $B$ and $A$ themselves, occurring
under the action of the ``right and left coefficients'' of the
$R$-matrix, in $R^{(2)}(B)$ and $R^{(1)}(A)$.  In general, moreover,
whenever a transposition of two fields does not square to the identity
transformation (the situation generally referred to as ``fractional
statistics''), some cuts on the complex plane must be chosen (or a
cover of the complex plane should be specified on which the fields are
defined).  Furthermore, the proposed OPE rules should also be extended
to include possible occurrences of $\log(z-w)$, which we leave for
future work.  But it is interesting to see how the scheme may work for
our $R$-matrix~\eqref{the-R} and ``parafermionic'' fields modeled on
the projective module in~\eqref{repPp1}.

\subsection{The $\UresSL2$ example}
We introduce $p-1$ pairs of conformal fields $\upzeta^m(w)$ and
$\updelta^m(w)$, $m=1,\dots,p-1$, carrying the same $\U$ action as the
$z^m$ and $\Dz^m$ in Sec.~\ref{sec:qline-all}, i.e.,
\begin{alignat*}{2}
  &\begin{aligned}
    E^{i} \upzeta^{m}(w) &= (-1)^i \q^{i m + \frac{i(i - 1)}{2}}
    \qbin{i+m-1}{m-1}\qfac{i}
    \,\upzeta^{m + i}(w),\quad\\
    F^{i} \upzeta^{m}(w) &= \q^{i(1 - m) + \frac{i(i - 1)}{2}}
    \qbin{m}{m-i}\qfac{i}
    \,\upzeta^{m - i}(w),
  \end{aligned}&
  K \upzeta^{m}(w) &=\q^{2m}\upzeta^{m}(w),
  \\
  &\begin{aligned}
    E^{i} \updelta^{m}(w) &= \q^{i(1 - m) + \frac{i(i - 1)}{2}}
    \qbin{m}{m-i}\qfac{i}
    \,\updelta^{m - i}(w),\\
    F^{i} \updelta^{m}(w) &= (-1)^i \q^{i m + \frac{i(i - 1)}{2}}
    \qbin{i+m-1}{m-1}\qfac{i}
    \,\updelta^{m + i}(w),
  \end{aligned}&
  K \updelta^{m}(w) &= \q^{-2m}\updelta^{m}(w),
\end{alignat*}
with $\updelta^0(w)=\upzeta^0(w)=1$ (and, formally,
$\updelta^m(w)=\upzeta^m(w)=0$ for $m<0$ or $m\geq p$).  Here,
$w\in\oC$, which is our ``space--time.''

We also have the derivative of each field, $\diff\upzeta^{m}(w)$ and
$\diff\updelta^{m}(w)$, which we view as space--time $1$-forms, and
hence regard $d$ as a differential.  The differential must commute
with the quantum group action, just as the differential $d$ in
Sec.~\ref{sec:calculus}, which allows the algebraic constructions
involving the differential to be carried over to the fields.

To summarize the notational correspondence between
Secs.~\ref{sec:qline-all}--\ref{sec:calculus} and this Appendix, we
write the dictionary
\begin{alignat}{3}
  \nonumber
  z^m|_{\text{Sec.~\ref{sec:qline-all}}}&
  \leftrightarrow\upzeta^m(w)|_{\text{App}},
  &\quad
  \Dz^m|_{\text{Sec.~\ref{sec:qline-all}}}&
  \leftrightarrow\updelta^m(w)|_{\text{App}},&\quad
  m&=0,\dots,p-1,
  \\
  d(z^m)|_{\text{Sec.~\ref{sec:qline-all}}}&
  \leftrightarrow\diff\upzeta^m(w)|_{\text{App}},
  &\quad
  d(\Dz^m)|_{\text{Sec.~\ref{sec:qline-all}}}&
  \leftrightarrow\diff\updelta^m(w)|_{\text{App}},&\quad
  m&=1,\dots,p-1
  \label{dictionary}
  \\
  \intertext{(we recall that $\upzeta^0(w)=\updelta^0(w)=1$),
    or, using~\eqref{eq:eta},}
  &&\smash[t]{\Dz^{m-1}\,\dDz|_{\text{Sec.~\ref{sec:calculus}}}}
  &\smash[t]{\leftrightarrow\upeta^m(w)|_{\text{App}}},
  &m&=1,\dots,p-1.
  \nonumber
\end{alignat}

\subsubsection{}
Either $E$ or $F$ (depending on the conventions) is to be associated
with the action of a screening operator in conformal field theory
(cf.~\cite{[FGST]}); screenings commute with Virasoro generators and
therefore do not change the conformal weight.  Because we have the
maps $F:\upzeta^1(w)\to 1$ and $E:\updelta^1(w)\to 1$, it follows that
both $\updelta^n(w)$ and $\upzeta^n(w)$ must have conformal weight~$0$
(see~\eqref{fieldsPp1}).

We then fix the basic OPEs of weight-$0$ fields:
\begin{gather*}
  \updelta^m(z)\,\upzeta^n(w)=\qint{m}\delta^{m,n} \log(z-w).
\end{gather*}
%% (The unit operator is omitted in the right-hand side here, but is
%% sometimes shown explicitly in what follows when its presence is to
%% be emphasized.)
Nonlogarithmic OPEs occur when the derivative of either $\upzeta^n(w)$
or $\updelta^n(z)$ is taken:
\begin{gather*}
  \diff\updelta^m(z)\,\upzeta^n(w)=\mfrac{\qint{m}\delta^{m,n}}{z-w},
  \qquad
  \updelta^m(z)\,\diff\upzeta^n(w)=-\mfrac{\qint{m}\delta^{m,n}}{z-w}.
\end{gather*}

\subsubsection{}\label{Lambda-reversed}
As we have noted, fractional-statistics fields generally require cuts
on the complex plane, because taking one of such fields around another
is not an identity transformation.  Therefore, for each ordered pair
of fields $(A,B)$, we must specify whether formula~\eqref{BA-OPE} is
to be used with $R$ or $R^{-1}$.
%% \footnote{We recall that the inverse $R$-matrix is obtained as
%%   $R^{-1}=(S\tensor\id)(R)=(\id\tensor S^{-1})(R)$.}  
The rule that we adopt in the current case can be formulated in terms
of diagrams of type~\eqref{fieldsPp1}: we do \textit{not} use the
formulas with the $R$-matrix when both $R^{(1)}$ and $R^{(2)}$ act
toward the socle (the bottom submodule) in~\eqref{fieldsPp1}.

For example, this rule allows rewriting $\upLambda$ with the reversed
normal-ordered products as
\begin{equation}\label{eq:Lambda-reversed}
  \upLambda
%%   =\sum_{n=0}^{p-1}\ffrac{1}{\qint{n}}\,
%%   [\upzeta^n,\updelta^n]_0
  =\sum_{n=1}^{p-1}\ffrac{1}{\qint{n}}\,
  [R^{(2)}(\updelta^n), R^{(1)}(\upzeta^n)]_0
  =\sum_{n=1}^{p-1}\!\sum_{i=0}^{p-1}
  \ffrac{g(i,n)}{\qint{n}}
  [\updelta^{n+i},\upzeta^{n+i}]_0
  =\sum_{n=1}^{p-1}\!\ffrac{\q^{-2n}}{\qint{n}}\,
  [\updelta^n,\upzeta^n]_0,
\end{equation}
where both $R^{(2)}\sim F^i$ and $R^{(1)}\sim E^i$ act ``to the
outside,'' and where we use the temporary notation
\begin{equation*}
  g(i,n)=(\q-\q^{-1})^i
  \q^{\frac{i(i-1)}{2} - i^2 - i - 2 n (i + n)}
  {\qbin{i+n-1}{n-1}}^2\qfac{i}.
\end{equation*}

The same strategy yields the transposed OPE
$\upzeta^n(z)\,\diff\updelta^m(w)$:
\begin{equation*}
  [\upzeta^m,\diff\updelta^n]_1
  =-[R^{(2)}(\diff\updelta^n), R^{(1)}(\upzeta^m)]_1
  = -\delta^{m,n}
  \sum_{i=0}^{p-1}
  g(i,n)
  \qint{n+i}
  = -\delta^{m,n}\q^{2n}\qint{n},
\end{equation*}
or, in a human-friendly form,
\begin{gather*}
  \upzeta^m(z)\,\diff\updelta^n(w)
  =-\mfrac{\delta^{m,n}\q^{2n}\qint{n}}{z-w}.
%%   \qquad
%%   \diff\upzeta^m(z)\,\updelta^n(w)
%%   =\mfrac{\delta^{m,n}\q^{2n}\qint{n}}{z-w}.
\end{gather*}
Thus, the effect of the $R$-matrix reduces in these cases to the phase
factor $\q^{2n}=e^{\frac{2i\pi n}{p}}$ occurring under transposition.

\subsubsection{}As a further example, we use the elementary OPEs just
obtained to calculate
\begin{multline*}
  [\diff\upzeta^m,\upLambda]_1
  =\sum_{n=1}^{p-1}\ffrac{1}{\qint{n}}
  [R^{(2)}(\upzeta^n), [R^{(1)}(\diff\upzeta^m),\updelta^n]_1]_0
  \\
  =
  \sum_{i=0}^{p-1}
  (\q-\q^{-1})^i
  (-1)^i
  \q^{\frac{i(i-1)}{2}  + 2 m (i + m)}
  \qbin{m+i}{m}
  \qbin{i+m-1}{m-1}\qfac{i}\q^{2(m+i)}
  \upzeta^{m}
  =\upzeta^{m}.
\end{multline*}
It then follows that $[\upLambda,\diff\upzeta^m]_1=
-[R^{(2)}(\diff\upzeta^m), R^{(1)}(\upLambda)]_1 = -[\diff\upzeta^m,
\upLambda]_1 = -\upzeta^{m}$ because only the $i=0$ term in the
$R$-matrix contributes to $[~,~]_1$.

Next, trying to directly apply~\eqref{ABC-OPE} to calculate
$[\diff\updelta^m,\upLambda]_1$ as
\begin{equation*}
  [\diff\updelta^m,\upLambda]_1
  =\sum_{n=1}^{p-1}\ffrac{1}{\qint{n}}\Bigl(
  [R^{(2)}(\upzeta^n), [R^{(1)}(\diff\updelta^m),\updelta^n]_1]_0
  + [[\diff\updelta^m,\upzeta^n]_{1},\updelta^n]_0\Bigr),
%%   =\updelta^m,
\end{equation*}
we encounter the forbidden arrangement of maps by the left and right
$R$-matrix coefficients; anticipating the result, we claim that this
is irrelevant in this case (essentially because $\diff$ in
$\diff\updelta^m$ annihilates the submodule spanned by unity), but it is
instructive to avoid the forbidden arrangement by using the
``reversed'' $\upLambda$ in~\eqref{eq:Lambda-reversed}:
\begin{multline*}
  [\diff\updelta^m,\upLambda]_1
  =\sum_{n=1}^{p-1}\ffrac{\q^{-2n}}{\qint{n}}
  [\diff\updelta^m,[\updelta^n,\upzeta^n]_0]_1
  =\sum_{n=1}^{p-1}\ffrac{\q^{-2n}}{\qint{n}}
  [R^{(2)}\updelta^n,[R^{(1)}\diff\updelta^m,\upzeta^n]_1]_0
  \\
  =\sum_{i=0}^{m-1}(\q-\q^{-1})^i \q^{\frac{i(i-1)}{2} + 2m(m-i)}(-1)^i
  \qbin{m-1}{m-i-1}\qbin{m}{m-i}\qfac{i}\updelta^m=\updelta^m.
\end{multline*}
It also follows that $[\upLambda,\diff\updelta^m]_1=-\updelta^m$.

\subsubsection{A ``parafermionic'' $\upzeta\upeta$ system}
Returning to the OPEs in~\bref{Lambda-reversed}, we represent the
derivative of $\updelta^n(w)$ as in~\eqref{eq:eta}.  Then the fields
$\upzeta^n(w)$ and $\upeta^n(w)$, whose OPEs are given by
\begin{equation*}
  \upeta^m(z)\,\upzeta^n(w)=\mfrac{\delta^{m,n}\q^{1-n}}{z-w},
  \qquad
  \upzeta^m(z)\,\upeta^n(w)=-\mfrac{\delta^{m,n}\q^{n+1}}{z-w},
\end{equation*}
make up a $(p-1)$-component ``parafermionic'' first-order system; it
generalizes the free fermions, which are indeed recovered for $p=2$,
when also $m=n=1$ (and $\q=i$).  The behavior of the $\upeta^n(w)$
under the $\U$ action is given by the formulas
in~\bref{action-formulas}, in accordance with the dictionary
in~\eqref{dictionary}.  

Similarly to the case of free fermions, we have the weight-$1$ field
(a current) $\JJ=
\sum_{n=1}^{p-1}\q^{n-1}[\upzeta^n,\upeta^n]_0$.
From~\eqref{two-modules}, we conclude that it participates in the
diagram
\begin{equation}\label{qmark}
  \xymatrix@=12pt@C=6pt{
    \JJ(w)=\displaystyle
    \smash{\sum_{n=1}^{p-1}}
%%     \fffrac{1}{\qint{n}}\upzeta^n\diff\updelta^n(w)
    \q^{n-1}\upzeta^n\upeta^n(w)
    \ar^(.6){F}[dr]
    &&&&?\ar_E[dl]
    \\  
    {}
    &\upeta^1(w)
    &\kern-10pt\rightleftarrows \ldots \rightleftarrows\kern-10pt
    &    
    \upeta^{p-1}(w),
  }
\end{equation}
where it remains to identify the ``cohomology corner'' \textit{in
  terms of fields} (we do not have an $\upeta^p(w)$,
see~\eqref{dictionary}).

The ``corner'' must be a field of the same conformal weight as the
current $\JJ(w)$, but must not be a bilinear combination of the
$\upzeta^m(w)$ and $\upeta^m(w)$.  It is naturally provided by the
setting in~\cite{[FHST]}, where the chiral algebra $W(p)$ and its
representation spaces are defined as the kernel of the ``short''
screening operator $S_-$, whereas the ``long'' screening $S_+$ acts on
the fields.  The action of a screening $S$ amounts to taking the
first-order pole in the OPE with the screening current $s(w)$, which
is often expressed as
\begin{gather*}
  S_\pm=\oint s_\pm(w)
\end{gather*}
(with a contour integration over $w$ implied).  In the standard
realization in terms of a free bosonic field $\phi(w)$, we have
%% $S_+=\mbox{\Large$\oint$}e^{\sqrt{2p}\,\varphi}$ and
%% $S_-=\mbox{\Large$\oint$} e^{-\sqrt{2/p}\,\varphi}$.
$s_+(w)=e^{\sqrt{2p}\,\phi(w)}$ and
$s_-(w)=e^{-\sqrt{2/p}\,\phi(w)}$.
%% $\alpha_+=\sqrt{2p}$ and $\alpha_-=-\sqrt{2/p}$. 
\ With $E\in\U$ identified with the screening operator $S_-$, we now
rescale the grading used in Fig.~\ref{fig:stack-of-P} as follows:
$\JJ(w)$ in~\eqref{qmark} is assigned degree~$0$ and each $F$ arrow
increases the degree by $\sqrt{2/p}$.  Then the question mark
in~\eqref{qmark} has the degree~$\sqrt{2p}$, and therefore \textit{the
  cohomology corner is filled with the screening current~$s_+(w)$}. \
We thus obtain~\eqref{dfieldsPp1}.

A field realization of the other module in~\eqref{two-modules}
requires taking a ``dual'' picture, in terms of the first-order
``parafermionic'' system comprised by the $\diff\upzeta^m(w)$ and
$\updelta^m(w)$ and the $\JJ(w)$ current used to construct the
screening.\footnote{None of these free-field systems, as is well known
  from the $p=2$ example, allows constructing ``logarithmic'' modules
  of the Virasoro or triplet algebra, i.e., indecomposable modules
  where $L_0$ is not diagonalizable.  Logarithmic modules require an
  integration, such as $\diff^{-1}\upeta^n(w)$, leading to the
  $\upzeta^n$, $\updelta^n$ fields.  A remarkable trace of this
  integration may already be observed at the algebraic level
  in~\eqref{the-t}\,---\,the $q$-integers in the denominator and an
  ``integration constant'' $\alpha 1$.}

\subsubsection{}
For the current $\JJ(w)$, the rules in~\bref{q-OPE} lead to the
standard OPE $[\upeta^m,\JJ]_1=\upeta^m$.  Transposing, we then find
$[\JJ,\upeta^m]_1=-[R^{(2)}(\upeta^m), R^{(1)}(\JJ)]_1 =-[\upeta^m,
\JJ]_1=-\upeta^m$ because only the $i=0$ term in $R$-matrix
\eqref{the-R} contributes.  Although $\JJ(w)$ is not a $\U$-invariant,
it behaves like one in a number of OPEs.

We next calculate the first (and the only) pole in the OPE
$\upzeta^m(z)\,\JJ(w)$:
\begin{multline*}
  [\upzeta^m, \JJ]_1
  =\sum_{n=1}^{p-1}\q^{n-1}\,
  [R^{(2)}(\upzeta^n), [R^{(1)}(\upzeta^m),\upeta^n]_1]_0
  \\
  =-
  \sum_{i=0}^{p-1}
  (\q-\q^{-1})^i
  \q^{\frac{i(i-1)}{2} + 2(m+i)m}
  (-1)^i
  \qbin{i+m-1}{m-1}
  \qbin{m+i}{m}\qfac{i}
  \q^{2(m+i)}\,\upzeta^{m}
  =-\upzeta^{m}.
\end{multline*}
It now readily follows that $[\JJ,\upzeta^m]_1=\upzeta^{m}$.

An instructive calculation is that of the $\JJ(z)\,\JJ(w)$ OPE:
%% \begin{equation*}
%%   [\JJ,\JJ]_2
%%   =\sum_{n=1}^{p-1} \ffrac{1}{\qint{n}}
%%   [[\JJ,\upzeta^n]_1,\diff\updelta^n]_1
%%   =\sum_{n=1}^{p-1} \ffrac{1}{\qint{n}}[\upzeta^n,\diff\updelta^n]_1
%%   =-\sum_{n=1}^{p-1}\q^{2n} = 1.
%% \end{equation*}
\begin{equation*}
  [\JJ,\JJ]_2
  =\sum_{n=1}^{p-1} \q^{n-1}
  [[\JJ,\upzeta^n]_1,\upeta^n]_1
  =\sum_{n=1}^{p-1} \q^{n-1}[\upzeta^n,\upeta^n]_1
  =-\sum_{n=1}^{p-1}\q^{2n} = 1.
\end{equation*}
Thus, although $\JJ(w)$ is a sum of the $p-1$ terms
$\q^{n-1}\,\upzeta^n\upeta^n(w)$, it does \textit{not} show the factor
$p-1$ in the $\JJ(z)\,\JJ(w)$ OPE.

Naturally, just the same is observed in the ``dual'' description, in
terms of another first-order system, with the current $\II$
in~\eqref{eq:II}.  With the OPEs $[\updelta^m, \II]_1 =-\updelta^m$
and $[\diff\upzeta^m, \II]_1 =\diff\upzeta^{m}$ (where in the last
formula the calculation is very much that for
$[\diff\upzeta^m,\upLambda]_1$),
%% as well as $[\II,\updelta^m]_1 =\updelta^m$ and
%% $[\II,\diff\upzeta^m]_1 =-\diff\upzeta^m$, 
it follows that $[\II,\II]_2 =-\sum_{n=1}^{p-1}\q^{2n}=1$, just as for
the $\JJ$ current.

\subsubsection{}
The same ``summation to minus unity'' occurs for the simplest
energy--momentum tensor, the normal ordered product
\begin{equation*}
  \TT=
  \sum_{n=1}^{p-1}
  \ffrac{1}{\qint{n}}\,[\diff\upzeta^n,\diff\updelta^n]_0
  =\sum_{n=1}^{p-1}
  \q^{n-1}\,[\diff\upzeta^n,\upeta^n]_0.
\end{equation*}
\textit{It is a $\U$ invariant}, which reduces the OPE calculations to
the standard, except at the last step in calculating half the central
charge:
\begin{equation*}
  [\TT,\TT]_4
%%   =\sum_{n=1}^{p-1}\ffrac{1}{\qint{n}}
%%   [\TT,[\diff\upzeta^n,\diff\updelta^n]_0]_4  
%%   =\sum_{n=1}^{p-1}\ffrac{1}{\qint{n}}\Bigl(
%%   3[[\TT,\diff\upzeta^n]_2,\diff\updelta^n]_2
%%   + [[\TT,\diff\upzeta^n]_1,\diff\updelta^n]_3
%%   \Bigr)
%%    \\
  =\sum_{n=1}^{p-1}\q^{n-1}\Bigl(
  3[\diff\upzeta^n,\upeta^n]_2
  + [\diff^2\upzeta^n,\upeta^n]_3
  \Bigr)
  =(3-2)\sum_{n=1}^{p-1}\q^{2n}=-1
\end{equation*}
and, similarly,
\begin{equation*}
  [\TT,\JJ]_3=-1.
\end{equation*}

The energy-momentum tensor can of course be ``improved'' by the
derivative of a current.  The ``$\JJ$-improved'' energy--momentum
tensor
\begin{equation*}
  \tilde\TT=\TT - \beta\diff\JJ
\end{equation*}
has the central charge $-2-12\beta^2+12\beta$, which coincides with
the one of the $(p,1)$ model for
\begin{equation*}
%%   \beta=\half - \ffrac{p-1}{\sqrt{2p}}.
  \beta=\bigl(1 + \ffrac{1}{\sqrt{2p}}\bigr)
  \bigl(1-\sqrt{\fffrac{p}{2}}\bigr).
\end{equation*}

\parindent0pt

\end{document}